\algrenewcommand\algorithmicrequire{\textbf{Input:}}
\algrenewcommand\algorithmicensure{\textbf{Return}}
\algnewcommand{\LineComment}[1]{\Statex \textcolor{blue}{$\blacktriangleright$ \textit{#1}}}
\algrenewcommand{\algorithmiccomment}[1]{\hfill\textcolor{blue}{$\blacktriangleright$ \textit{#1}}}  %
\newcommand{\op}[1]{\operatorname{#1}}
\def\renewtheorem#1{%
  \expandafter\let\csname#1\endcsname\relax
  \expandafter\let\csname c@#1\endcsname\relax
  \gdef\renewtheorem@envname{#1}
  \renewtheorem@secpar
}
\def\renewtheorem@secpar{\@ifnextchar[{\renewtheorem@numberedlike}{\renewtheorem@nonumberedlike}}
\def\renewtheorem@numberedlike[#1]#2{\newtheorem{\renewtheorem@envname}[#1]{#2}}
\def\renewtheorem@nonumberedlike#1{  
\def\renewtheorem@caption{#1}
\edef\renewtheorem@nowithin{\noexpand\newtheorem{\renewtheorem@envname}{\renewtheorem@caption}}
\renewtheorem@thirdpar
}
\def\renewtheorem@thirdpar{\@ifnextchar[{\renewtheorem@within}{\renewtheorem@nowithin}}
\def\renewtheorem@within[#1]{\renewtheorem@nowithin[#1]}
\definecolor{darkgreen}{rgb}{0,0.4,0.0}
\definecolor{C0}{HTML}{1F77B4}
\definecolor{C1}{HTML}{ff7f0e}
\definecolor{C2}{HTML}{2ca02c}
\definecolor{C3}{HTML}{d62728}
\definecolor{C4}{HTML}{9467bd}
\newtheorem{theorem}{Theorem}
\newtheorem{lemma}[theorem]{Lemma}
\newtheorem{property}[theorem]{Property}
\newtheorem{proposition}[theorem]{Proposition}
\newtheorem{corollary}[theorem]{Corollary}
\newtheorem{remark}[theorem]{Remark}
\newtheorem*{example*}{Example}
 \crefname{property}{Property}{Properties}
\newcommand{\prefix}{\bfA_{\mathsf{pre}}}
\newcommand{\strategy}{\bfC}
\newcommand{\BLT}{\mathsf{BLT}}
\newcommand{\genf}{f}
\newcommand{\hatgenf}{\hat{\genf}}
\newcommand{\iind}{_{i=1}^d}
\newcommand{\iinD}{_{i=1}^D}
\newcommand{\nd}{\ensuremath{n}} %
\newcommand{\mdim}{m}  %
\newcommand{\inv}{^{-1}}
\newcommand{\sens}{\ensuremath{\mathsf{sens}}}
\newcommand{\bfA}{{\bm{A}}}
\newcommand{\bfB}{{\bm{B}}}
\newcommand{\bfC}{{\bm{C}}}
\newcommand{\bfG}{{\bm{G}}}
\newcommand{\bfI}{{\bm{I}}}
\newcommand{\bfM}{{\bm{M}}}
\newcommand{\bfW}{{\bm{W}}}
\newcommand{\bfZ}{{\bm{Z}}}
\newcommand{\bfg}{{\bm{g}}}
\newcommand{\bfu}{{\bm{u}}}
\newcommand{\bfv}{{\bm{v}}}
\newcommand{\bftheta}{\ensuremath{\bm{\theta}}}
\newcommand{\bfalpha}{\ensuremath{\bm{\alpha}}}
\newcommand{\bflambda}{\ensuremath{\bm{\lambda}}}
\newcommand{\norm}[1]{\left\|{#1}\right\|}
\newcommand{\diag}{\ensuremath{\mathrm{diag}}}
\newcommand{\lfrob}[1]{\left\|#1\right\|_{\op{F}}}
\newcommand{\rownorm}[1]{\left\| #1 \right\|_{\mathsf{row}}}
\newcommand{\colnorm}[1]{\left\| #1 \right\|_{\mathsf{col}}}
\newcommand{\T}{^{\top}}  %
\newcommand{\R}{\mathbb{R}}
\newcommand{\C}{\mathbb{C}}
\renewcommand{\epsilon}{\varepsilon}
\DeclarePairedDelimiterX{\inp}[2]{\langle}{\rangle}{#1, #2}  %
\newcommand{\mech}{\mathcal{M}}
\newcommand{\normal}{\mathcal{N}}
\newcommand{\normaldim}[3]{\normal_{#1}\big(#2, #3\big)}
\newcommand{\normalnm}[2]{\normaldim{n\!\times\!m}{#1}{#2}}
\newcommand{\workload}{\bfA}
\newcommand{\zeros}{{\bm{0}}}
\newcommand{\decay}{\lambda}
\newcommand{\decayv}{\bflambda}
\newcommand{\scale}{\alpha}
\newcommand{\scalev}{\bfalpha}
\newcommand{\polyDiff}{r}  %
\newcommand{\polyDenom}{p}  %
\newcommand{\polyInvDenom}{q}  %
\title{An Inversion Theorem for Buffered Linear Toeplitz (BLT) Matrices and Applications to Streaming Differential Privacy}
\author{H. Brendan McMahan$^1$
\and
Krishna Pillutla$^2$
}
\date{
\small
$^1$Google Research $\quad$
$^2$Wadhwani School of Data Science \& AI, IIT Madras
}
\begin{document}

\maketitle
\begin{abstract}
Buffered Linear Toeplitz (BLT) matrices are a family of parameterized lower-triangular matrices that play an important role in streaming differential privacy with correlated noise. Our main result is a BLT inversion theorem: the inverse of a BLT matrix is itself a BLT matrix with different parameters.  
We also present an efficient and differentiable $O(d^3)$ algorithm to compute the parameters of the inverse BLT matrix, where $d$ is the degree of the original BLT (typically $d < 10$). Our characterization enables direct optimization of BLT parameters for privacy mechanisms through automatic differentiation.
\end{abstract}

\section{Introduction}
\label{sec:intro}

We consider the inverses of a family of parameterized lower-triangular matrices known as Buffered Linear Toeplitz (BLT) matrices~\cite{dvijotham2024efficient}. Given a scale parameter $\scalev = (\scale_1, \ldots, \scale_{d}) \in \R^d$ and a decay parameter $\decayv = (\decay_1, \ldots, \decay_{d}) \in \R^d$, the $n \times n$ BLT matrix is defined as
\begin{align} \label{eq:blt-param}
    \BLT_n(\scalev, \decayv)
    := \begin{pmatrix}
    1 & 0 & 0 & 0 & \cdots \\
    \sum\iind \scale_i & 1 & 0 & 0 & \cdots  \\
    \sum\iind \scale_i \decay_i  & \sum\iind \scale_i & 1 & 0 & \cdots \\ 
    \sum\iind \scale_i \decay_i^2  &  \sum\iind \scale_i \decay_i   & \sum\iind \scale_i & 1 & \ddots \\ 
    \vdots  & \vdots & \vdots & \vdots & \ddots
    \end{pmatrix}_{n \times n} \,.
\end{align}
The matrix $\strategy = \BLT_n(\scalev, \decayv)$ is lower triangular and Toeplitz (i.e., it has equal entries along each diagonal from the top-left to the bottom-right), with ones along the principal diagonal:\footnote{
    We denote the $(j,k)$\textsuperscript{th} entry of the matrix $\strategy \in \R^{n \times n}$ by $\strategy[j, k]$; \cref{tab:notation} in \Cref{sec:notation} provides a complete notation summary.
}
\begin{equation}\label{eq:coefs}
    \strategy[j, k] = \begin{cases}
        0, & \text{if } j < k \,, \\
        1, & \text{if } j = k \,, \\
        \sum\iind \scale_i \decay_i^{j-k-1}, & \text{if } j > k\,.
    \end{cases}
\end{equation}
Such parameterized matrices (and their inverses) are central to streaming differential privacy with correlated noise, achieving near-optimal tradeoffs between privacy, utility, and computation cost; we describe this in detail in \Cref{sec:dp-bg}.

The main result of this note is that the BLT family of matrices is closed under inversion. In particular, the inverse of a degree-$d$ BLT matrix $\BLT_n(\scalev, \decayv)$ (which always exists) is also a BLT matrix of the same order $d$: 
\[
    \BLT_n(\scalev, \decayv)^{-1} = \BLT_n(\hat\scalev, \hat\decayv) \quad \text{for all } n > 0
\]
for all integers $n > 0$ for \emph{unique} parameters $\hat\scalev, \hat\decayv \in \R^d$.
We also give an equivalence between representing a (BLT, inverse BLT) system $\BLT_n(\scalev, \decayv)^{-1} = \BLT_n(\hat \scalev, \hat \decayv)$ using (a) both the parameters $\scalev, \decayv$ of the first BLT, and (b) both the decay parameters $\decayv, \hat \decayv$.
Finally, we give a differentiable algorithm to compute the BLT inverse in $O(d^3)$ time for any size $n$.

Next, we provide some background on streaming differential privacy in \Cref{sec:dp-bg}. BLT matrices and their inverses play an important role in this setting. We give a full formal statement of our BLT inversion theorem in \Cref{sec:main-results}.
We give the key ideas behind the proofs in \Cref{sec:proof-main} with full proof details in \Cref{sec:proof}.

\section{Background}
\label{sec:dp-bg}

Let $\bfG \in \R^{\nd \times \mdim}$ be a sequence of $\nd$ vectors in $\R^\mdim$ stacked row-wise into a matrix. Each vector $\bfg_t$ (i.e. $t$\textsuperscript{th} row of $\bfG$) is assumed to satisfy $\norm{\bfg_t}_2 \le \zeta$ for some constant $\zeta > 0$.
We aim to estimate (in a differentially private manner) a sequence of (known) linear combinations of these vectors,  represented as the rows  of $\workload \bfG \in \R^{n\times m}$; here, $\workload \in \R^{\nd \times \nd}$ is known as the \emph{workload matrix}.

This setup captures diverse problems such as continual counting and stochastic optimization under differential privacy. In the latter case, $\bfg_t$ is an unbiased estimator of the loss gradient evaluated at the current model parameters $\bftheta_t$.
The workload matrix captures the optimization algorithm: stochastic gradient descent (SGD) with a constant learning rate $\eta$ corresponds to the prefix sum workload $\prefix$, which is the lower triangular matrix with all ones. This is because each iterate $\bftheta_t = \bftheta_0 - \eta \sum_{\tau < t} \bfg_\tau$ of SGD relies on estimating the prefix sums $\sum_{\tau < t} \bfg_\tau$, which are the rows of $\prefix \bfG$. Other first-order optimizers such as SGD with momentum correspond to different workloads.

The matrix mechanism for differential privacy~\cite{li2015matrix,nikolov2016geometry}, known also as DP-FTRL~\cite{kairouz2021practical,denisov2022improved} in the learning setting, injects correlated noise to release private estimates of $\workload\bfG$. Given a factorization $\workload = \bfB \strategy{}$ with $\strategy{}$ invertible, this correlated noise mechanism is defined as
\begin{align} \label{eq:mech}
    \mech(\bfG) = \bfB(\strategy{}\bfG + \bfZ) = \workload{}(\bfG + \strategy^{-1} \bfZ)\,,
\end{align}
where $\bfZ \in \R^{\nd \times \mdim}$ is component-wise i.i.d. Gaussian noise. We scale\footnote{
    The notation $\bfZ \sim \normalnm{0}{\sigma^2}$ denotes a random matrix  $\bfZ \in \R^{n \times m}$ whose entries are i.i.d. $\normal(0, \sigma^2)$.
}  $\bfZ \sim \normalnm{0} {\sens(\strategy{})^2 \sigma^2}$,
where $\sens(\strategy{})$ is the $\ell_2$-sensitivity of the operation $\bfG \mapsto \strategy{}\bfG$, while $\sigma$ is a noise multiplier depending only on the desired privacy level; e.g. we take $\sigma^2 = 1 / (2\rho)$ for a $\rho$-zero-concentrated DP guarantee~\cite{bun16concentrated}. Note that multiplication by $\bfB$ is simply a post-processing step that does not affect the privacy guarantee. 
The sensitivity $\sens(\strategy{})$ depends on how adjacent $\bfG, \bfG'$ are allowed to differ. In the learning setting, if a data item can appear only once in training, then $\sens(\strategy) = \colnorm{\strategy}$ is the maximum column norm of the matrix $\strategy{}$. Different expressions exist when each data item can participate more than once~\cite[see e.g.][Eq. (2)]{choquette2023amplified}. 

In general, we aim to find the factorization $\workload=\bfB\strategy$ to minimize the worst-case expected (squared) $\ell_2$ norm across all rows of $\mech(\bfG) - \workload \bfG = \bfB \bfZ = \workload^{-1} \strategy{} \bfZ$. This can be evaluated (assuming the norm constant $\zeta=1$ w.l.o.g.) as the (square of the) \emph{max loss}
\begin{align} \label{eq:maxloss}
    L(\bfC) := \sens(\strategy) \cdot \rownorm{\workload^{-1}\strategy{}}\,,
\end{align}
where $\rownorm{\bfB}$ denotes the maximum row norm of the matrix $\bfB$. Sometimes, we may choose the Frobenius norm $\lfrob{\cdot}$ instead of $\rownorm{\cdot}$ to compute the average expected (squared) $\ell_2$ norm across rows of $\workload\bfG$ instead of the worst-case.

\paragraph{BLT Mechanism}
A major focus of prior research has been to improve the privacy-utility-compute tradeoffs of the mechanism \eqref{eq:mech} in theory and practice; see \cite{fichtenberger2023constant,choquette2023amplified,mckenna2024scaling,henzinger2025improved} and the references therein. 
In particular, the computation cost of computing $(\strategy\inv \bfZ)[t, :]$ in each iteration $t$ dominates the running time of the algorithm in the learning setting.
The \emph{BLT mechanism}~\cite{dvijotham2024efficient} achieves state-of-the-art tradeoffs. In general, it restricts the $\strategy{}$ matrix to be Toeplitz and parameterizes its first column $c_1, c_2, \ldots$ as $c_t = \bfu\T \bfW^{t-1} \bfv$ using a matrix $\bfW$ and two vectors $\bfu, \bfv$.
We focus on the \emph{diagonal BLT} formulation described in \cref{eq:blt-param}, which corresponds to diagonal $\bfW$; this formulation has been preferred in empirical studies for being more computationally efficient without sacrificing utility~\cite{dvijotham2024efficient,mcmahan2024hassle}.

A key advantage of the BLT mechanism is that the rows of the correlated noise $\strategy{}^{-1}\bfZ$ in \cref{eq:mech} can be generated in a streaming fashion with $O(d \mdim)$ time and space complexity~\cite[cf.][Alg. 2,3]{mcmahan2024hassle}; notably, this is independent of the iteration counter.
Together with additive utility guarantee in the streaming setting where $\sens(\strategy) = \colnorm{\strategy}$, this leads to near-optimal privacy-utility-compute tradeoffs with the prefix sum workload $\workload=\prefix$. In particular, for any size $n > 0$ and error term $\delta > 0$, there exist some parameters $\scalev, \decayv \in \R^d$ for $d = O(\log^2(n/\delta))$ that give an \emph{additive} approximation of the optimal max error:
\begin{align} \label{eq:blt-approx}
    L\big(\BLT_n(\scalev, \decayv)\big)
    \le \min\{ L(\strategy) \,: \strategy\in \R^{n \times n} \text{ is lower-triangular \& Toeplitz } \} + \delta \,.
\end{align}

In this work, we show that the inverse of a diagonal BLT of the form of \cref{eq:blt-param} is another diagonal BLT; we give a precise statement in \Cref{sec:main-results}. We also describe how to find the parameters $\hat \decayv, \hat\scalev$ of the inverse BLT in a differentiable manner so that max loss \eqref{eq:maxloss} can be optimized (as a function of the BLT parameters $\scalev, \decayv$) using automatic differentiation.

\paragraph{Parameter Restrictions}
We restrict ourselves to the BLT decay parameter $\decayv \in (0, 1)^d$, and our main result (\Cref{thm:main}) further focuses on the case where each $\scale_i > 0$ and $\sum\iind \scale_i < 1$.  These restrictions are not strictly necessary, in that \cref{eq:blt-param} is a well-defined (and invertible) matrix for any parameters $\scalev, \decayv \in \R^d$.
Why these restrictions? In short, we believe they identify the most practically important subclass of BLTs where the goal is to approximate the optimal Toeplitz matrix (see the right side of \cref{eq:blt-approx}).
This allows sharper and simpler theoretical characterizations and numerically stable mechanisms. 
The restriction $\scalev > 0$ (or $\scalev < 0$) is beneficial when optimizing BLTs, and ensure our subclass is closed under matrix inversion. 
For example, the previous works \citep{dvijotham2024efficient,mcmahan2024hassle} restrict the search over $\scalev$ to over strictly positive entries by imposing log-barrier functions in the optimization.

 Do these restrictions lead to sub-optimal mechanisms? For the problem of correlated noise DP mechanisms for single-participation, strong empirical evidence from prior work \citep{dvijotham2024efficient,mcmahan2024hassle} shows that this is not the case. In particular,
 \citet{dvijotham2024efficient} showed BLTs that satisfy these restrictions can for all practical purposes perfectly match the optimal Toeplitz matrix (i.e. the Toeplitz matrix $\strategy$ minimizing $L(\strategy)$ optimizing, as in the right size of \cref{eq:blt-approx}), and \citet{mcmahan2024hassle} showed strong performance for a common multiple-participation setting. Nevertheless, it is possible that for some applications the additional expressive power of allowing some $\decay_i < 0$ could make it worth investigating this case further.

\paragraph{Notation Summary}
We use the shorthand $[d] := \{1, \ldots, d\}$.
Vectors are denoted by boldfaced lower-case (Greek or Latin) letters (e.g. $\decayv$, $\bfu$) while matrices are denoted by boldfaced upper-case letters (e.g. $\strategy$ or $\bfM$); both are 1-indexed.
Often, we will denote the first columns of the lower triangular and Toeplitz matrices $\strategy$ and $\strategy^{-1}$ with shorthand $c_t = \strategy[t, 1]$ and $\hat c_t = (\strategy\inv)[t, 1]$ respectively.
We give a detailed summary of the notation in \Cref{tab:notation} of \Cref{sec:notation}.

\section{Main Results}
\label{sec:main-results}

Our main result is a BLT inversion theorem: the inverse of a BLT matrix is also a unique BLT. We give some properties of the inverse BLT parameters. All proofs are given in \Cref{sec:proof-main,sec:proof}.
We say a vector of parameters $\decayv$ is \textbf{distinct} if it holds that
\[ 
 \decay_i \neq \decay_j 
 \qquad 
 \text{for all $i, j \in [d]$ such that $i \neq j$}.
\]

\begin{theorem}
\label{thm:main}
    The matrix $\BLT_n(\scalev, \decayv)$ is invertible for any integer $n > 0$ for any parameters $\scalev \in \R^d$ and $\decayv \in \R^d$ for all integers $n > 0$ and $d > 0$.
    In addition, if the scale parameters are positive ($\scale_i > 0$) and satisfy $\sum\iind \scale_i < 1$, and the decay parameters $\decayv \in (0, 1)^d$ are distinct, then there exist parameters $\hat\scalev, \hat\decayv \in \R^d$  with $\hat \decayv$ distinct such that $\BLT_n(\scalev, \decayv)^{-1} = \BLT_n(\hat\scalev, \hat\decayv)$ for all integers $n > 0$.
    Further, the scale parameters of the inverse are negative (i.e. $\hat \scale_i < 0$ for all $i$), and the decay parameters of the inverse satisfy have the following:
    \begin{enumerate}[label=(\alph*)]
        \item \label{part:thm:lt-1}
        If $\sum\iind \scale_i / \decay_i < 1$, then $\hat\decay_i \in (0, 1)$ for each $i\in [d]$. 
        \item \label{part:thm:gt-1}
        If $\sum\iind \scale_i / \decay_i > 1$, then there exists an integer $j \in [d]$ such that $\hat\decay_j \in (-1, 0)$ and $\hat\decay_i \in (0, 1)$ for all $i \in [d], i \neq j$.
        \item \label{part:thm:eq-1}
        Finally, if $\sum\iind \scale_i / \decay_i = 1$, then there exists an integer $j \in [d]$ such that $\hat\decay_j= 0$ and $\hat\decay_i \in (0, 1)$ for all $i \in [d], i \neq j$.
    \end{enumerate}
    Furthermore, these inverse parameters $\hat \decayv, \hat \scalev$ are unique (up to permutations of indices).
\end{theorem}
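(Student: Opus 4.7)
The plan is to reduce matrix inversion to a polynomial factorization problem via generating functions. Since $\BLT_n(\scalev, \decayv)$ is lower triangular with unit diagonal, it has determinant $1$ and is invertible for any parameters, establishing the first sentence. Lower-triangular Toeplitz matrices with unit diagonal form a group under matrix multiplication, with the product corresponding to convolution of first columns; consequently $\strategy^{-1}$ is itself lower-triangular Toeplitz with unit diagonal and is completely determined by its first column. Encoding the first column of $\strategy = \BLT_n(\scalev, \decayv)$ as the rational power series
\[
    f(z) = 1 + z \sum_{i=1}^d \frac{\scale_i}{1-\decay_i z} = \frac{P(z)}{Q(z)}, \quad Q(z) := \prod_{i=1}^d (1-\decay_i z), \quad P(z) := Q(z) + z M(z),
\]
with $M(z) := \sum_{i=1}^d \scale_i \prod_{j\ne i}(1-\decay_j z)$, matrix inversion corresponds to reciprocating the power series: $\hat f(z) := 1/f(z) = Q(z)/P(z)$. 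So the task is to factor $P(z) = \prod_i(1-\hat\decay_i z)$ with real $\hat\decay_i$ (allowing $\hat\decay_j = 0$ in case (c)); partial fractions then pin down $\hat\scalev$ uniquely. Since this identification is independent of $n$, the relation $\BLT_n(\scalev,\decayv)^{-1} = \BLT_n(\hat\scalev,\hat\decayv)$ holds for every $n$.

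To locate the $\hat\decay_i$, substitute $w = 1/z$; these are precisely the roots of the monic degree-$d$ polynomial
\[
    g(w) = w^d P(1/w) = \prod_{i=1}^d(w-\decay_i) + \sum_{i=1}^d \scale_i \prod_{j\ne i}(w-\decay_j).
\]
Ordering $\decay_1 < \dots < \decay_d$, one computes $g(\decay_i) = \scale_i \prod_{j\ne i}(\decay_i - \decay_j)$, which (since $\scale_i > 0$) has sign $(-1)^{d-i}$. The intermediate value theorem then yields one root in each interval $(\decay_i, \decay_{i+1})$ for $i = 1, \dots, d-1$; and since $g(\decay_d) > 0$ with $g(w)\to+\infty$ as $w\to+\infty$, there is no root above $\decay_d$. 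The remaining root lies in $(-\infty, \decay_1)$, and its precise position follows from the two evaluations
\[
    g(0) = (-1)^{d-1}\Bigl(\textstyle\prod_i \decay_i\Bigr)\Bigl(\sum_i \tfrac{\scale_i}{\decay_i} - 1\Bigr), \quad g(-1) = (-1)^{d-1}\Bigl(\textstyle\prod_i(1+\decay_i)\Bigr)\Bigl(\sum_i \tfrac{\scale_i}{1+\decay_i} - 1\Bigr).
\]
Since $\decay_i>0$, $\sum_i \scale_i/(1+\decay_i) < \sum_i \scale_i < 1$, so $g(-1)$ matches the sign of $g(-\infty)$, confining the extra root to $(-1,\decay_1)$. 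Comparing $g(0)$ with $g(\decay_1)$ (of sign $(-1)^{d-1}$) then separates the three cases: opposite signs give case (a) with the root in $(0, \decay_1)\subset(0,1)$; equal signs give case (b) with the root in $(-1,0)$; and case (c) is exactly $g(0)=0$. Distinctness of all $d$ roots is automatic since they lie in disjoint intervals.

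From $\hat f(z) - 1 = -z M(z)/P(z)$, a standard residue computation combined with the identity $g(\hat\decay_i) = 0$ yields the compact formula
\[
    \hat\scale_i = \frac{\prod_j(\hat\decay_i - \decay_j)}{\prod_{l\ne i}(\hat\decay_i - \hat\decay_l)}.
\]
The interleaving established above makes the sign count routine: letting $r$ be the rank of $\hat\decay_i$ among the $\hat\decay$'s, the numerator has $d-r+1$ negative factors and the denominator has $d-r$, so the ratio is negative and $\hat\scale_i < 0$. Uniqueness of $(\hat\scalev,\hat\decayv)$ up to permutation follows from the uniqueness of the factorization of $P(z)$ into linear factors and of the partial-fraction expansion. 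The most delicate step of the proof is the sign analysis pinning down the ``extra'' root below $\decay_1$: interleaving via alternating signs of $g(\decay_i)$ is routine, but ruling out $\hat\decay_j \le -1$ in case (b) uses the full strength of $\sum_i \scale_i < 1$ through the formula for $g(-1)$, while separating cases (a)--(c) reduces to the formula for $g(0)$.
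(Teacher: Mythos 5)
Your proof is correct and follows essentially the same route as the paper: generating functions reduce inversion to locating the roots of $\polyInvDenom(x) = \polyDenom(x) + x\,\polyDiff(x)$ via the same sign-alternation and intermediate-value argument (your evaluations $g(0)$ and $g(-1)$ correspond exactly to the paper's leading coefficient $\polyInvDenom_d$ and $\polyInvDenom(-1)$), followed by the identical partial-fraction formula and sign count for $\hat\scale_i$. The only difference is cosmetic: you work with the reversed monic polynomial $g(w) = w^d \polyInvDenom(1/w)$ whose roots are the $\hat\decay_i$ directly, so the degenerate case $\sum\iind \scale_i/\decay_i = 1$ appears uniformly as a root at $w=0$ rather than as a degree drop of $\polyInvDenom$, whereas the paper works with $\polyInvDenom$ itself and the reciprocal roots $\nu_i = 1/\hat\decay_i$.
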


Following \cref{eq:coefs},  $\strategy[2, 1] = \sum\iind \scale_i$, and so the assumption  $\sum\iind \scale_i < 1$ is useful because we typically want the first column of the $\strategy$ matrix to be decreasing in the context of streaming differential privacy \cite{dvijotham2024efficient,mcmahan2024hassle}.
Furthermore, we empirically observe that BLT parameters optimized for the max loss tend to satisfy $\hat \decay_i \in (0, 1)$, paralleling the assumption that $\decay_i \in (0, 1)$ for the original BLT; this is also true for the theoretical construction of \citet{dvijotham2024efficient}.
This corresponds to the regime of $\sum\iind \scale_i / \decay_i < 1$ as per \Cref{thm:main}\ref{part:thm:lt-1}.
BLTs satisfying the condition of \Cref{thm:main}\ref{part:thm:eq-1} are degenerate in the sense that one of the decay parameters is exactly zero. We give an example where this holds.
\begin{example*} 
    Consider a BLT of degree $d=2$ with parameters 
    $\scalev=({2}/{5}, {1}/{5})$ and $\decayv=({4}/{5}, {2}/{5})$
    so that $\sum_i \scale_i / \decay_i = 1/2 + 1/2 = 1$. Then, we have that its inverse is a BLT given by the parameters\footnote{
        This can be verified, for instance, using the upcoming \Cref{lem:generating-fn}.
    } 
    $\hat \scalev = (-1/15, -8/15)$ and $\hat \decayv = ({3}/{5}, 0)$. In particular, notice that we have a decay parameter of exactly $0$ in the inverse, and hence $\hat\scale_2 = -8/15$ only influences the second Toeplitz coefficient $(\strategy^{-1})[j+1, j] = \sum\iind \hat\scale_i$.
\end{example*}

For practical applications, we usually optimize for the BLT parameters, and hence will not reach the measure zero set of parameters with $\sum_i \scale_i / \decay_i = 1$ when the optimizer is initialized randomly. Further, a small numerical error in the BLT parameters is enough to make this degeneracy vanish.

\paragraph{Comparison to Previous Literature}
We make two remarks where \Cref{thm:main} significantly simplifies and extends prior work.
First, \citet[Lemma 5.2]{dvijotham2024efficient} showed (using our notation) that given an appropriate pair of decay parameters $(\decayv, \hat\decayv)$, there exist scale parameters $(\scalev, \hat\scalev)$ such that $\BLT(\scalev, \decayv)\inv = \BLT(\hat\scalev, \hat\decayv)$. 
We strengthen this result with the help of \Cref{thm:main} to show that a (BLT, inverse BLT) system can be parameterized in two equivalent ways. We also give significantly simplified expressions compared to \citep[Lemma 5.2]{dvijotham2024efficient}.
\begin{theorem} \label{thm:calc-output-scale}
    Let $\decayv, \hat \decayv \in \R^d$ be distinct non-zero vectors (i.e., $\decay_i \neq \decay_j$ and $\hat \decay_i \neq \hat \decay_j$ for all $i \neq j$) that also satisfy $\decay_i \neq \hat\decay_j$ for all $i, j \in [d]$.
    Then, there exist unique scale parameters $\scalev, \hat\scalev \in \R^d$ that achieve $\BLT_n(\scalev, \decayv) = \BLT_n(\hat\scalev, \hat \decayv)^{-1}$ for all $n > 0$, given by:
    \begin{align} \label{eq:calc-output-scale}
        \scale_i &=  \frac{\prod_{j=1}^d \decay_i - \hat \decay_j}{\prod_{j \neq i} \decay_i - \decay_j}\,,
        \quad\text{and} \quad
        \hat\scale_i = \frac{\prod_{j=1}^d \hat\decay_i - \decay_j}{\prod_{j \neq i} \hat\decay_i - \hat\decay_j} \,.
    \end{align}
    Furthermore, the following two parameterizations describe the same class of (BLT, inverse BLT) systems satisfying $\BLT(\scalev, \decayv)^{-1} = \BLT(\hat\scalev, \hat\decayv)$:
    \begin{enumerate}[label=(\alph*), nosep]
    \item \label{item:pos-param}
    positive scale parameters $\scalev \in \R^d_{++}$ and distinct decay parameters $\decayv \in (0, 1)^d$ of the BLT that satisfy $\sum\iind \scale_i / \decay_i < 1$;
    \item \label{item:decay-param}
    a pair of decay parameters $\decayv, \hat \decayv$ that satisfy the strict interlacing condition
    \[
    1 > \decay_1 > \hat \decay_1 > \decay_2 > \hat \decay_2 > \cdots > \hat \decay_{d-1} > \decay_d > \hat \decay_d > 0 .
    \]
    \end{enumerate}
    Given either parameterization, the system $\BLT(\scalev, \decayv)^{-1} = \BLT(\hat\scalev, \hat\decayv)$ is uniquely determined.
\end{theorem}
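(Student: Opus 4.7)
The plan is to reduce everything to the rational generating function of the first column of a BLT. By \Cref{lem:generating-fn}, the first column of $\BLT_n(\scalev, \decayv)$ has generating function $\genf(x) = 1 + x\sum_{i=1}^d \scale_i/(1-\decay_i x)$, and product of lower-triangular Toeplitz matrices corresponds to the product of generating functions. Hence $\BLT_n(\scalev, \decayv) = \BLT_n(\hat\scalev, \hat\decayv)^{-1}$ for all $n$ is equivalent to the rational-function identity $\genf(x)\hat\genf(x) = 1$.

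For the scale formulas, I would combine $\genf$ over the common denominator $Q(x) = \prod_i(1-\decay_i x)$ to write $\genf(x) = P(x)/Q(x)$ with $P(0)=1$. The inverse condition then forces $\hat\genf(x) = Q(x)/P(x)$; for this to be the generating function of a BLT with decays $\hat\decayv$ (which, by hypothesis, are distinct, nonzero, and disjoint from $\{\decay_j\}$), we must have $P(x) = \prod_i(1-\hat\decay_i x)$. Thus $\genf(x) = \prod_i(1-\hat\decay_i x)/\prod_i(1-\decay_i x)$. Equating the residue at the simple pole $x = 1/\decay_i$ computed from the partial-fraction expression (which yields $\scale_i/\decay_i$) with the residue read off from the product form gives, after clearing powers of $\decay_i$, the claimed formula $\scale_i = \prod_j(\decay_i - \hat\decay_j)/\prod_{j\neq i}(\decay_i - \decay_j)$. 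Symmetry gives the companion formula for $\hat\scale_i$, and uniqueness of $(\scalev, \hat\scalev)$ given $(\decayv, \hat\decayv)$ is automatic because the generating function is then fully determined.

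For the equivalence \ref{item:pos-param} $\Leftrightarrow$ \ref{item:decay-param}, the direction \ref{item:pos-param} $\Rightarrow$ \ref{item:decay-param} is an intermediate-value argument on $\genf$. WLOG order $1 > \decay_1 > \cdots > \decay_d > 0$. The real rational $\genf$ has simple poles at $1/\decay_i$ with strictly positive residues $\scale_i/\decay_i$, so on each of the $d-1$ intervals $(1/\decay_i, 1/\decay_{i+1})$, $\genf$ goes from $-\infty$ to $+\infty$ and has a zero. A leading-coefficient computation of $P$ gives $\lim_{x\to\infty}\genf(x) = 1 - \sum_i \scale_i/\decay_i$, which is strictly positive by hypothesis, while $\genf(x) \to -\infty$ as $x \to (1/\decay_d)^+$, producing one additional zero in $(1/\decay_d,\infty)$. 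The same leading-coefficient computation shows $P$ has degree exactly $d$ (precisely because $\sum\scale_i/\decay_i\neq 1$), so these $d$ distinct positive real zeros exhaust the roots of $P$, giving the claimed interlacing. For \ref{item:decay-param} $\Rightarrow$ \ref{item:pos-param}, use the scale formula from the previous step: under the interlacing, a direct sign count shows that both $\prod_j(\decay_i-\hat\decay_j)$ and $\prod_{j\neq i}(\decay_i-\decay_j)$ carry exactly $i-1$ negative factors, hence $\scale_i > 0$; and $\lim_{x\to\infty}\genf(x) = \prod\hat\decay_i/\prod\decay_i \in (0,1)$ since $0<\hat\decay_i<\decay_i$, which equals $1 - \sum_i\scale_i/\decay_i$, giving $\sum_i\scale_i/\decay_i\in(0,1)$.

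The main obstacle is the root-counting step in \ref{item:pos-param} $\Rightarrow$ \ref{item:decay-param}: one must simultaneously verify that $P$ has degree exactly $d$, that every root is real and positive, and that exactly one root lies in each of the prescribed intervals, which requires coordinated tracking of the residue signs, the sign of $\genf$ at $\infty$, and the leading coefficient of $P$. Uniqueness of the entire system given either parameterization is then a direct consequence: given \ref{item:pos-param}, \Cref{thm:main} determines $\hat\decayv$ uniquely and the scale formula determines $\hat\scalev$; given \ref{item:decay-param}, the formulas determine both $\scalev$ and $\hat\scalev$.
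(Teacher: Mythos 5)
Your proposal is correct and follows essentially the same route as the paper: reduce to the identity $\genf(x)\hatgenf(x)=1$ of rational generating functions via \Cref{lem:generating-fn}, extract the scale formulas by the residue/limit computation at $x=1/\hat\decay_i$ (this is exactly \Cref{prop:consts-expression}), prove \ref{item:pos-param}$\Rightarrow$\ref{item:decay-param} by an intermediate-value root count of the numerator between consecutive poles $1/\decay_i$ together with the leading-coefficient and limit-at-infinity analysis (the paper's \Cref{prop:r_of_mu_i,prop:final-root,cor:order-of-decays}), and prove the converse by the same sign count of the factors in \cref{eq:calc-output-scale} plus the identity $\lim_{x\to\infty}\genf(x)=1-\sum\iind\scale_i/\decay_i=\prod\iind\hat\decay_i/\prod\iind\decay_i$. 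One cosmetic quibble: under the standard convention the residue of $\genf$ at $x=1/\decay_i$ is $-\scale_i/\decay_i^2<0$ rather than the ``strictly positive $\scale_i/\decay_i$'' you state, but the sign behavior you deduce ($\genf\to+\infty$ from the left of each pole and $-\infty$ from the right) is correct for $\scale_i>0$, so the argument stands.
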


Next, we turn to BLT inversion theorems.
\citet[Proposition 5.6]{dvijotham2024efficient} show that the inverse of $\BLT_n(\scalev, \decayv)$ is a lower-triangular and Toeplitz matrix whose first column $\hat c_1, \hat c_2, \ldots$ is given by 
\begin{align} \label{eq:old-inversion-thm}
\hat c_t = 
\begin{cases}
\bfu\T \bfv + \kappa \,, & \text{ if } t=1\,, \\
\bfu\T \bfW^{t-1} \bfv \, &\text{ if } t > 1 \,.
\end{cases}
\end{align}
for vectors $\bfu, \bfv \in \R^{d}$, a matrix $\bfW \in \R^{d \times d}$ and a scalar $\kappa \in R$.\footnote{
This representation is not unique, as $\tilde\bfu = \bfM \bfu$, $\tilde\bfW = \bfM \bfW \bfM\T$, $\tilde \bfv = \bfM\bfv$ satisfies $\bfu\T \bfW^\tau \bfv = \tilde \bfu\T \tilde \bfW^\tau \tilde \bfv$ for all $\tau \ge 0$ for any orthonormal matrix $\bfM$.
}
\Cref{thm:main} implies that we can instead take
\[ %
    \bfW = 
    \begin{pmatrix}
        \hat\decay_1 \\
        & \ddots \\
        & & \hat\decay_d
    \end{pmatrix} \in \R^{d \times d}, \quad 
    \bfu = \begin{pmatrix}
    1 \\  \vdots \\ 1
    \end{pmatrix} \in \R^{d}, \quad
    \bfv = \begin{pmatrix}
        \hat\scale_i / \hat\decay_i \\
        \vdots \\
        \hat\scale_d / \hat\decay_d
    \end{pmatrix} \in \R^{d} \,, 
    \quad 
    \kappa = 1 - \sum\iind \frac{\scale_i}{\decay_i}
    \,.
\]
We can verify by direct computation that this produces $\hat c_1 = 1$ and $\hat c_t = \sum\iind \hat\scale_i \hat\decay_i^{t-2}$ for $t \ge 2$, as desired.
In other words, \Cref{thm:main} shows the existence of a $\bfW$ of rank $d$ and all real eigenvalues satisfying \cref{eq:old-inversion-thm}. 

The representation in \Cref{thm:main} is more convenient from a computational perspective, as operations on diagonal $\bfW$ can be implemented more efficiently.
Of course, given any $\bfu, \bfv, \bfW$ that satisfy $\hat c_t = \bfu\T \bfW^{t-1} \bfv$ (e.g. by the approach of \cite[Proposition 5.6]{dvijotham2024efficient}), we can then find the inverse BLT decay parameter $\hat\decayv$ by diagonalizing
$\bfW = \bfM \, \diag(\hat\decayv) \, \bfM^{-1}$, \emph{assuming} it is possible, and scale parameter $\hat\scalev = (\bfM\T \bfu) \odot (\bfM^{-1} \bfv)$, where $\odot$ denotes component-wise multiplication of vectors.
By showing that $\bfW$ has all real and unique eigenvalues, \Cref{thm:main} establishes that this matrix is diagonalizable.

\paragraph{Algorithms for BLT Inversion}
We give an algorithm to directly find the parameters $\hat\scalev$ and $\hat\decayv$ of the inverse BLT. Specifically, the decay parameter $\hat\decayv$ is obtained from the finding the roots of a degree-$d$ polynomial $\polyInvDenom$ (whose roots are guaranteed to all be real). 
While these roots can be analytically obtained for degrees $d = 4$ or lower, analytical expressions for the roots of general polynomials of degree $d \ge 5$ are impossible. (This is also known as the Abel–Ruffini theorem, see e.g. \cite{ayoub1980ruffini,ramond2022abel}.)

Instead, we use numerical polynomial root-finding procedures to find all the roots of the polynomial $\polyInvDenom$. The typical approach proceeds by constructing a non-symmetric matrix, known as the companion matrix, and then finding its eigenvalues in $O(d^3)$ time (see e.g. \cite{roots} or the upcoming \Cref{sec:algo})---these eigenvalues are exactly the roots of the polynomial $\polyInvDenom$. 
Importantly, all these operations are supported by typical automatic differentiation frameworks, including JAX and PyTorch. 
This allows us to parameterize the optimization of a BLT and its inverse as $(\scalev, \decayv)$, rather than as $(\decayv, \hat\decayv)$ as in \cite{dvijotham2024efficient}.
See \Cref{sec:algo} for details.

\section{Technical Tools and Proof Outline}
\label{sec:proof-main}

The proofs of \Cref{thm:main,thm:calc-output-scale} rely on a deep connection between Toeplitz matrices and ordinary generating functions.

The \textbf{ordinary generating function} of a sequence $(c_t)_{t=1}^\infty$ is the formal power series:\footnote{
    The variable $x$ in a formal power series should be interpreted as a formal symbol rather than a numerical value. Specifically, we neglect any concerns related to convergence.
}
\[
    f_c(x) := \sum_{t=0}^\infty c_{t+1} x^t \,.
\]
This is closely related to the $Z$-transform, which can be obtained by the symbolic substitution $z = 1/x$.
The sequence $(c_t)_{t=1}^\infty$ can be obtained from the Maclaurin expansion of the generating function:
\[
    f_c(x) = \sum_{t=0}^\infty \frac{f_c^{(t)}(0)}{t!} \, x^t  = \sum_{t=0}^\infty c_{t+1} x^t 
    \quad \iff \quad 
    c_{t+1} = \frac{f_c^{(t)}(0)}{t!} \,,
\]
where $f_c^{(t)}$ denotes the $t$\textsuperscript{th} derivative of the function $f_c$ (assuming it exists).

The key relationship between lower-triangular Toeplitz matrices and generating functions is that the product of two Toeplitz matrices (i.e., the convolution of their first columns) is equivalent to the product of the generating functions of their respective coefficients. (This is analogous to the fact that the convolution of two sequences can be obtained from the product of their $Z$-transforms.)
\begin{lemma} \label{lem:generating-fn}
    For any sequences $(a_t)_{t=1}^\infty, (b_t)_{t=1}^\infty, (c_t)_{t=1}^\infty$ that take values in a field $K$(e.g. the field $\R$ of reals or $\C$ of complex numbers), the following are equivalent:
    \begin{enumerate}[label=(\roman*),nosep]
        \item The respective ordinary generating functions $f_a, f_b$, and $f_c$ of sequences $(a_t)_{t=1}^\infty, (b_t)_{t=1}^\infty$, and $(c_t)_{t=1}^\infty$ satisfy $f_a(x) = f_b(x) f_c(x)$.
        \item For any integer $n > 0$, the $n \times n$ lower triangular Toeplitz matrices $\bfM_a, \bfM_b, \bfM_c$ with respective first columns given by sequences $(a_t)_{t=1}^{n}, (b_t)_{t=1}^{n}, (c_t)_{t=1}^{n}$ satisfy $\bfM_a = \bfM_b \bfM_c$.
    \end{enumerate}
\end{lemma}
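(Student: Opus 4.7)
\medskip
\noindent\textbf{Proof plan.} The plan is to verify both directions by direct computation: each statement is equivalent to the sequence identity
\[
a_{t+1} \;=\; \sum_{s=0}^{t} b_{s+1}\, c_{t-s+1} \qquad \text{for all } t \ge 0,
\]
namely the discrete convolution of $(b_t)$ with $(c_t)$ equals $(a_t)$. Statement (i) is just the definition of multiplication of formal power series: the coefficient of $x^t$ in $f_b(x)f_c(x)$ is exactly this convolution, so (i) holds iff the identity above holds for every $t \ge 0$. The remaining work is to show that (ii) encodes the very same family of identities.

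For the matrix side, I would fix $n > 0$ and compute $(\bfM_b \bfM_c)[j,k]$ directly. The lower-triangular structure forces this to be $0$ for $j < k$, so $\bfM_b \bfM_c$ is lower triangular. For $j \ge k$, the Toeplitz structure gives $\bfM_b[j,\ell] = b_{j-\ell+1}$ when $\ell \le j$ (and zero otherwise), and $\bfM_c[\ell,k] = c_{\ell-k+1}$ when $\ell \ge k$ (and zero otherwise); the sum $\sum_{\ell=1}^{n} \bfM_b[j,\ell]\,\bfM_c[\ell,k]$ therefore collapses to
\[
(\bfM_b \bfM_c)[j,k] \;=\; \sum_{\ell=k}^{j} b_{j-\ell+1}\, c_{\ell-k+1} \;=\; \sum_{s=0}^{j-k} b_{j-k-s+1}\, c_{s+1},
\]
after the substitution $s = \ell - k$. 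This depends only on $j-k$, so $\bfM_b \bfM_c$ is lower triangular Toeplitz, and with $t := j-k$ the right-hand side is exactly the convolution above (re-indexed). Equating this with $\bfM_a[j,k] = a_{j-k+1}$ yields, as $(j,k)$ ranges over pairs with $1 \le k \le j \le n$, the convolution identities for all $t = 0, 1, \dots, n-1$. Quantifying over all $n > 0$ in (ii) recovers the full family of identities and hence (i).

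\medskip
\noindent\textbf{Obstacles.} There is essentially no obstacle beyond careful index bookkeeping; the argument works over any field $K$ since only ring operations are used and no convergence issues arise for formal power series. The one subtlety worth flagging is the finite truncation: the coefficient of $x^t$ in $f_b f_c$ involves only $b_1, \dots, b_{t+1}$ and $c_1, \dots, c_{t+1}$, so for $t \le n-1$ it is computed correctly from the truncated first columns of $\bfM_b$ and $\bfM_c$. This confirms that passing between the formal power-series product and the finite matrix product is lossless on the first $n$ coefficients, which is exactly what is needed to close the equivalence.
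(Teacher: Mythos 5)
Your proposal is correct: both (i) and (ii) reduce to the discrete convolution identity $a_{t+1}=\sum_{s=0}^{t}b_{s+1}c_{t-s+1}$, and your index bookkeeping for $(\bfM_b\bfM_c)[j,k]$ is accurate, including the observation that the entry depends only on $j-k$ and that truncation to the leading $n\times n$ block is lossless for the first $n$ coefficients. The paper states this lemma without proof, treating it as a standard fact (the Toeplitz analogue of the $Z$-transform convolution property), so your argument is precisely the expected one and fills in the omitted details correctly.
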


We are most interested in real sequences. In particular, \Cref{lem:generating-fn} tells us that we can calculate the inverse $\hat \strategy = \strategy^{-1}$ of any lower-triangular Toeplitz matrix $\strategy$ whose first column is obtained as a prefix of the sequence $(c_t)_{t=0}^\infty$
by the following steps: 
\begin{itemize}
    \item Compute its generating function $f_c(x)$;
    \item Calculate the reciprocal $\hat f_c(x) = 1/f_c(x)$;
    \item Calculate its Maclaurin series $\hat c_{t+1} = {\hat f_c^{(t)}(0)} / (t!)\,\, $ for $t \ge 0$;
    \item Construct the lower-triangular Toeplitz matrix $\hat \strategy$ with first column $\hat c_1, \hat c_2, \ldots$.
\end{itemize}
Moreover, this holds for any leading principal sub-matrix: that is, $\strategy[1:n, 1:n]^{-1} = \hat \strategy[1:n, 1:n]$ for any integer $n > 0$. Thus, it suffices to consider infinite Toeplitz matrices; we denote infinite BLT matrices as $\BLT(\scalev, \decayv)$ by dropping the subscript $n$.

\paragraph{BLT to Generating Function}
We start by computing the generating function of the BLT  and inverse BLT:
\begin{lemma} \label{lem:blt-ogfs}
    The generating function $\genf(x)$ of $\BLT(\scalev, \decayv)$ is given by
    \begin{align} \label{eq:genfn-c1}
        \genf(x) &= 1 + x \, \frac{\polyDiff(x)}{\polyDenom(x)} = \frac{\polyInvDenom(x)}{\polyDenom(x)} \,,
    \end{align}
    where we define the polynomials
    \begin{align}  \label{eq:genfn-c2}
        \polyDenom(x) = \prod\iind (1 - \decay_i x), \quad
        \polyDiff(x) = \sum\iind \frac{\scale_i \polyDenom(x)}{1 - \decay_i x},
        \quad \text{and} \quad
        \polyInvDenom(x) = \polyDenom(x) + x \, \polyDiff(x) \,.
    \end{align}
    Moreover, the generating function of $\hatgenf$ of the inverse matrix $\BLT(\scalev, \decayv)^{-1}$ is given by 
    \begin{align} \label{eq:genfn-cinv}
    \begin{aligned}
        \hatgenf(x) = \frac{\polyDenom(x)}{\polyInvDenom(x)} = 1 - x \,\frac{\polyDiff(x)}{\polyInvDenom(x)} \,.
    \end{aligned}
    \end{align}
\end{lemma}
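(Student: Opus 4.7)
The plan is to compute $\genf(x)$ directly from the definition of the BLT coefficients and then invoke \Cref{lem:generating-fn} to get $\hatgenf(x)$ essentially for free.

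First, I would read off the first column of $\BLT(\scalev, \decayv)$ from \cref{eq:coefs}: we have $c_1 = 1$ and $c_{t+1} = \sum\iind \scale_i \decay_i^{t-1}$ for $t \geq 1$. Substituting into the definition of the ordinary generating function and separating the $t = 0$ term gives
\[
    \genf(x) = 1 + \sum_{t=1}^\infty \left(\sum\iind \scale_i \decay_i^{t-1}\right) x^t = 1 + x \sum\iind \scale_i \sum_{s=0}^\infty (\decay_i x)^s.
\]
At the level of formal power series, each inner sum is $1/(1 - \decay_i x)$ (or, equivalently, we can treat this as an identity of rational functions). So $\genf(x) = 1 + x \sum\iind \scale_i / (1 - \decay_i x)$.

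Next, I would put the partial-fraction-style sum over the common denominator $\polyDenom(x) = \prod\iind (1 - \decay_i x)$. Each term $\scale_i / (1 - \decay_i x)$ becomes $\scale_i \polyDenom(x) / \big((1 - \decay_i x) \polyDenom(x)\big)$, so their sum is $\polyDiff(x) / \polyDenom(x)$ with $\polyDiff$ as defined in \cref{eq:genfn-c2}. Thus $\genf(x) = 1 + x \polyDiff(x)/\polyDenom(x) = (\polyDenom(x) + x \polyDiff(x))/\polyDenom(x) = \polyInvDenom(x)/\polyDenom(x)$, which is precisely \cref{eq:genfn-c1}.

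For the inverse, I would use \Cref{lem:generating-fn} applied to the identity $\BLT(\scalev, \decayv) \cdot \BLT(\scalev, \decayv)^{-1} = \bfI$ (whose first column corresponds to the constant generating function $1$). This forces $\hatgenf(x) = 1/\genf(x) = \polyDenom(x)/\polyInvDenom(x)$. The second expression in \cref{eq:genfn-cinv} is then immediate from the defining identity $\polyDenom(x) = \polyInvDenom(x) - x \polyDiff(x)$, yielding $\hatgenf(x) = 1 - x \polyDiff(x)/\polyInvDenom(x)$.

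There is no real obstacle here: the argument is entirely mechanical once one recognizes that the BLT coefficients are a sum of $d$ geometric sequences (so the generating function is naturally a sum of $d$ simple rational functions) and uses the convolution-to-product correspondence of \Cref{lem:generating-fn} in reverse to handle the inverse. The only minor care needed is to justify the geometric-series step at the level of formal power series, but since \Cref{lem:generating-fn} is stated formally over an arbitrary field $K$, one can simply work in $\R((x))$ (or $\R[[x]]$ after clearing denominators) throughout.
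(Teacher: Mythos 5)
Your proposal is correct and follows essentially the same route as the paper's proof: sum the geometric series to get $\genf(x) = 1 + \sum\iind \scale_i x/(1-\decay_i x)$, clear denominators to obtain $\polyInvDenom(x)/\polyDenom(x)$, and invoke \Cref{lem:generating-fn} to conclude $\hatgenf(x) = 1/\genf(x)$. The extra care you take about the formal-power-series justification of the geometric-series step is a fine (if not strictly necessary) addition.
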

\begin{proof}
By summing the geometric series, we get
\[
    \genf(x) = 1 + \sum_{t=1}^\infty \left(\sum\iind \scale_i \decay_i^{t-1} \right) x^t = 1 + \sum\iind \frac{\scale_i x}{1 - \decay_i x} \,.
\]
Simplifying this gives \cref{eq:genfn-c1}. For the inverse, we have from \Cref{lem:generating-fn} that
\[
     \hatgenf(x) = \frac{1}{\genf(x)} = \frac{\polyDenom(x)}{\polyDenom(x) + x \, \polyDiff(x)}
     = 1 - \frac{x \, \polyDiff(x)}{\polyDenom(x) + x\, \polyDiff(x)} = 1 + x\, \frac{-\polyDiff(x)}{\polyInvDenom(x)} \,.
\]
\end{proof}

Note that the polynomial $\polyDenom$ is of degree $d$, while the polynomial $\polyDiff$ is of degree $d-1$. The polynomial $\polyInvDenom(x) = \polyDenom(x) + x\, \polyDiff(x)$ is thus a sum of two degree-$d$ polynomials, and its degree $D = \mathrm{deg}(\polyInvDenom)$ can be at most $d$.
The generating functions $\genf$ and $\hatgenf$ are rational functions of degree at most $d$ in both the numerator and denominator.

\paragraph{Generating Function to Inverse BLT}
To reconstruct the inverse BLT, we need to find the Maclaurin series of its generating function $\hatgenf$. When $\hatgenf$ is a rational function, as is the case in \Cref{lem:blt-ogfs}, this is most easily obtained by a partial fraction decomposition of $\hatgenf(x)$. This approach is also commonly used in solving recursions in combinatorics and discrete mathematics, and filter design in digital signal processing.

The first step to construct a partial fraction decomposition is to reason about the roots of the denominator $\polyInvDenom(x) = \polyDenom(x) + x \, \polyDiff(x)$.
\begin{proposition} \label{prop:roots}
     Consider the setting of \Cref{thm:main} with parameters $\scalev \in \R^d_{++}$ and $\decayv \in (0, 1)^d$, and let $\polyDiff(x)$ and $\polyDenom(x)$ be as defined in \cref{eq:genfn-c2} (see \Cref{lem:blt-ogfs}). Then, the polynomial $\polyInvDenom(x) = \polyDenom(x) + x \, \polyDiff(x)$ has degree $D=d-1$ if $\sum\iind \scale_i / \decay_i = 1$ and $D = d$ otherwise. Moreover, all its roots $\nu_1, \ldots, \nu_D$ are unique and real.
\end{proposition}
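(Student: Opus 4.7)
The plan would be to reduce the problem to analyzing the zeros of the rational function $\genf(x) = \polyInvDenom(x)/\polyDenom(x) = 1 + \sum\iind \scale_i x/(1 - \decay_i x)$ from \Cref{lem:blt-ogfs}, since its real zeros coincide with the real roots of $\polyInvDenom$. There is no hidden cancellation between numerator and denominator: a direct evaluation gives $\polyInvDenom(1/\decay_i) = (1/\decay_i)\,\scale_i \prod_{j \ne i}(1 - \decay_j/\decay_i) \ne 0$ because $\scale_i > 0$ and the $\decay_j$ are distinct. I would first pin down $D = \deg(\polyInvDenom)$ by computing its $x^d$ coefficient: the leading term of $\polyDenom$ is $(-1)^d (\prod_i \decay_i)\,x^d$, while expanding $x\,\polyDiff(x) = x\sum\iind \scale_i \prod_{j \ne i}(1-\decay_j x)$ contributes a leading term $(-1)^{d-1}(\prod_i \decay_i)(\sum\iind \scale_i/\decay_i)\,x^d$. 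Summing, the $x^d$ coefficient of $\polyInvDenom$ equals $(-1)^d (\prod_i \decay_i)\bigl(1 - \sum\iind \scale_i/\decay_i\bigr)$, which vanishes exactly when $\sum\iind \scale_i/\decay_i = 1$, giving $D \le d-1$ in case (c) and $D = d$ in cases (a), (b).

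The heart of the argument is a sign-change/intermediate value argument on $\genf$. Reordering indices so that $\decay_1 > \cdots > \decay_d$, the poles of $\genf$ lie at $1/\decay_1 < \cdots < 1/\decay_d$, and because every residue $\scale_i > 0$ we have $\genf(x) \to +\infty$ as $x \to (1/\decay_i)^-$ and $\genf(x) \to -\infty$ as $x \to (1/\decay_i)^+$. Continuity on each interval $(1/\decay_i, 1/\decay_{i+1})$ for $i = 1, \ldots, d-1$ then yields one real root of $\polyInvDenom$ in that open interval, producing $d-1$ distinct real roots in pairwise disjoint intervals. Combined with the degree bound above, this already forces $D = d-1$ in case (c) and exhausts the roots in that case.

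For cases (a) and (b) I would locate the missing $d$-th root using the two-sided limit $\genf(x) \to 1 - \sum\iind \scale_i/\decay_i$ as $x \to \pm\infty$ together with the boundary value $\genf(0) = 1$. In case (a), the sign change from $\genf((1/\decay_d)^+) = -\infty$ to $\genf(+\infty) = 1 - \sum\iind \scale_i/\decay_i > 0$ places a root in $(1/\decay_d, \infty)$; in case (b), the sign change from $\genf(-\infty) = 1 - \sum\iind \scale_i/\decay_i < 0$ to $\genf(0) = 1 > 0$ places one in $(-\infty, 0)$. In both regimes the $d$ located roots lie in pairwise disjoint intervals, match the degree count $D = d$, and therefore exhaust the roots and are automatically distinct.

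The main obstacle is just the case analysis at infinity: the "extra" root migrates from the interval $(1/\decay_d, \infty)$ in case (a) to $(-\infty, 0)$ in case (b), and at the critical value $\sum\iind \scale_i/\decay_i = 1$ the root is absorbed by a drop in degree rather than appearing on the real line. Organizing the boundary-value bookkeeping cleanly so that each regime contributes exactly the right root count — and noting that the disjointness of the containing intervals gives distinctness for free — is the only real subtlety; once the correct "extra" interval is identified in each case, the proof reduces to repeated applications of the same IVT step on $\genf$.
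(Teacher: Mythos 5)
Your proof is correct and follows essentially the same route as the paper: an intermediate-value/interlacing argument producing $d-1$ roots between consecutive poles $1/\decay_i$, the same leading-coefficient computation $(-1)^d\big(\prod_i\decay_i\big)\big(1-\sum\iind\scale_i/\decay_i\big)$ to settle the degree $D$, and a separate search for the $d$\textsuperscript{th} root. The differences are in execution rather than strategy. Where the paper evaluates $\polyInvDenom(\mu_i)=\mu_i\,\polyDiff(\mu_i)$ and computes the alternating sign of $\beta_i=\polyDiff(\mu_i)$ explicitly (\Cref{prop:p_of_mu_i}), you read the same sign changes off the pole behavior of $\genf=\polyInvDenom/\polyDenom$, using only that the coefficients $\scale_i$ are positive; this is a slightly slicker packaging of the same fact. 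Two of your choices are genuine (small) simplifications for this particular proposition: in the case $\sum\iind\scale_i/\decay_i>1$ you bracket the extra root in $(-\infty,0)$ via $\genf(0)=1$ and $\genf(-\infty)<0$, which avoids the paper's evaluation $\polyInvDenom(-1)>0$ and hence does not need the hypothesis $\sum\iind\scale_i<1$ here (the paper needs the sharper conclusion $\nu_d<-1$ only for \Cref{prop:technical-main}); and you obtain realness and distinctness of all roots simultaneously by exhibiting $D$ distinct real roots of a degree-$D$ polynomial in pairwise disjoint intervals, rather than invoking the conjugate-pair argument for the last root as in \Cref{prop:r_of_mu_i}.
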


Since the polynomial $\polyInvDenom(x) = \polyDenom(x) + x \, \polyDiff(x)$ has all unique and real roots, the partial fraction decomposition of $\polyDiff(x) / \polyInvDenom(x)$ takes a very simple form:

\begin{proposition} \label{prop:partial-frac}
    Consider the setting of \Cref{prop:roots}. 
    The polynomials $\polyDiff$ and $\polyInvDenom$ are co-prime and we have a unique partial fraction decomposition
    \begin{align} \label{eq:partial-frac}
        \frac{\polyDiff(x)}{\polyInvDenom(x)} = -\sum\iind \frac{\hat \scale_i}{1 - \hat \decay_i x} 
    \end{align}
    for some $\hat\scalev, \hat \decayv \in \R^d$ with $\hat\scalev$ non-zero component-wise.
    Further, we have the following based on $\mathrm{deg}(\polyInvDenom) = D$:
    \begin{enumerate}[label=(\alph*)]
        \item If $D = d$, then $\hat \decay_i \neq 0$ for all $i \in [d]$.
        \item \label{item:pfd-low-rank}
        If $D = d-1$, then $\hat \decay_i \neq 0$ for $i \in [d-1]$ and $\hat \decay_d = 0$.
        \item Finally, we have $\hat\decay_i = 1/\nu_i$ for $i \in [D]$ where $\nu_1, \ldots, \nu_D$ are the roots of the degree-$D$ polynomial $\polyInvDenom$.
    \end{enumerate}
\end{proposition}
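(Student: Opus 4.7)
The plan is to derive the decomposition via the standard partial fraction theorem, after checking coprimality and then splitting into two cases based on $\mathrm{deg}(\polyInvDenom)$.

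First I would establish coprimality of $\polyDiff$ and $\polyInvDenom$. If $\nu$ were a common root, then $\polyDenom(\nu) = \polyInvDenom(\nu) - \nu \polyDiff(\nu) = 0$ forces $\nu = 1/\decay_i$ for some $i \in [d]$. Plugging into $\polyDiff(x) = \sum\iind \scale_i \polyDenom(x)/(1-\decay_i x)$, every summand indexed by $k \neq i$ retains the factor $(1 - \decay_i x)$ in its numerator (via $\polyDenom$) and vanishes at $\nu$, leaving $\polyDiff(\nu) = \scale_i \prod_{j \neq i}(1 - \decay_j/\decay_i)$, which is nonzero by positivity of $\scalev$ and distinctness of $\decayv$, a contradiction. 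I would also note that $\polyInvDenom(0) = \polyDenom(0) = 1$, so all roots of $\polyInvDenom$ are nonzero, and Proposition \ref{prop:roots} gives the factorization $\polyInvDenom(x) = \prod_{i=1}^D(1 - \hat \decay_i x)$ with $\hat\decay_i := 1/\nu_i$ being $D$ distinct nonzero reals.

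In the case $D = d$, the ratio $\polyDiff/\polyInvDenom$ is a proper rational function with coprime numerator and denominator, so the partial fraction theorem produces unique coefficients realizing \eqref{eq:partial-frac}. Multiplying \eqref{eq:partial-frac} by $(1 - \hat\decay_i x)$ and taking $x = 1/\hat\decay_i$ yields the residue formula $-\hat\scale_i = \polyDiff(1/\hat\decay_i)/\prod_{j \neq i}(1 - \hat\decay_j/\hat\decay_i)$: the numerator is nonzero by the coprimality just proved, and the denominator is nonzero by distinctness of $\hat\decayv$, so $\hat\scale_i \neq 0$. In the case $D = d-1$ (equivalently $\sum\iind \scale_i/\decay_i = 1$), both $\polyDiff$ and $\polyInvDenom$ have degree $d-1$, so polynomial division gives $\polyDiff/\polyInvDenom = c + \tilde r/\polyInvDenom$ with $\mathrm{deg}(\tilde r) < d - 1$ and $c$ the ratio of leading coefficients. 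A short direct expansion shows that the leading coefficient of $\polyDiff$ equals $(-1)^{d-1}\prod_j \decay_j \cdot \sum\iind \scale_i/\decay_i$, which by hypothesis equals $(-1)^{d-1}\prod_j \decay_j \neq 0$; hence $c \neq 0$. Setting $\hat\decay_d := 0$ and $\hat\scale_d := -c$ absorbs the constant $c$ as the $i = d$ summand of \eqref{eq:partial-frac} (since $-\hat\scale_d/(1 - \hat\decay_d x) = -\hat\scale_d = c$), and the proper-case argument applied to $\tilde r/\polyInvDenom$ supplies the remaining nonzero $\hat\scale_1, \ldots, \hat\scale_{d-1}$. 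Uniqueness of the whole tuple (up to permutation) then follows from uniqueness of the partial fraction expansion together with uniqueness of the roots of $\polyInvDenom$.

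The main subtlety will be the improper case $D = d - 1$: aligning the constant term of the expansion with a $\hat\decay_d = 0$ entry of the stated form, and verifying that the leading coefficient of $\polyDiff$ is nonzero precisely in this regime. Everything else reduces to bookkeeping and invocation of standard partial fraction uniqueness.
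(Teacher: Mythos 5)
Your proof is correct and takes essentially the same route as the paper: establish that $\polyDiff$ and $\polyInvDenom$ are co-prime (the paper via a gcd identity, you via a direct common-root argument, both reducing to $\polyDiff(\mu_i)\neq 0$), then apply the standard partial fraction decomposition with a case split on $D=\mathrm{deg}(\polyInvDenom)$, absorbing the constant term as the $\hat\decay_d=0$ summand when $D=d-1$. The only cosmetic differences are that you compute the leading coefficient of $\polyDiff$ explicitly to show the quotient constant is nonzero where the paper uses a degree-counting argument on the numerator, and you verify $\hat\scale_i\neq 0$ via the residue formula rather than the paper's denominator-degree argument.
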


\noindent Together with \Cref{lem:generating-fn,lem:blt-ogfs}, this immediately implies a partial fraction decomposition of the generating function $\hatgenf$ of $\BLT(\scalev, \decayv)\inv$:
\begin{corollary} \label{cor:blt-inverse-form}
    In the setting of \Cref{prop:partial-frac}, we have that the generating function of $\BLT(\scalev, \decayv)\inv$ is given by
    \begin{align} \label{eq:hatgenf}
    \hatgenf(x) = 1 + x\,\frac{-\polyDiff(x)}{\polyInvDenom(x)} = 1 + \sum\iind \frac{\hat\scale_i x}{1-\hat \decay_i x}\,,
    \end{align}
    where $\hat \scalev, \hat \decayv$ are as given in \Cref{prop:partial-frac}. In particular, we have that $\BLT(\scalev, \decayv)\inv = \BLT(\hat\scalev, \hat\decayv)$.
\end{corollary}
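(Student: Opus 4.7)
The plan is to chain together the three preceding technical results: the generating-function expression for $\hatgenf$ from \Cref{lem:blt-ogfs}, the partial-fraction decomposition from \Cref{prop:partial-frac}, and the Toeplitz/generating-function correspondence of \Cref{lem:generating-fn}. Essentially no new calculation is required; the work is bookkeeping plus one degenerate-case check.

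First, I would start from the second expression for $\hatgenf$ in \cref{eq:genfn-cinv} of \Cref{lem:blt-ogfs}, namely $\hatgenf(x) = 1 + x \cdot (-\polyDiff(x)/\polyInvDenom(x))$. Next, I would substitute the partial-fraction decomposition from \Cref{prop:partial-frac}, $\polyDiff(x)/\polyInvDenom(x) = -\sum\iind \hat\scale_i / (1 - \hat\decay_i x)$, to immediately obtain
\[
\hatgenf(x) = 1 + x \sum\iind \frac{\hat\scale_i}{1 - \hat\decay_i x} = 1 + \sum\iind \frac{\hat\scale_i x}{1 - \hat\decay_i x} ,
\]
which establishes the displayed equality \eqref{eq:hatgenf}.

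Second, to identify this with $\BLT(\hat\scalev,\hat\decayv)$, I would expand each term as a geometric series, exactly as in the opening computation of the proof of \Cref{lem:blt-ogfs}: $\hat\scale_i x/(1-\hat\decay_i x) = \sum_{t \ge 1} \hat\scale_i \hat\decay_i^{t-1} x^t$. Collecting coefficients of $x^t$ shows that the sequence associated to $\hatgenf$ is $(\hat c_t)_{t\ge 1}$ with $\hat c_1 = 1$ and $\hat c_t = \sum\iind \hat\scale_i \hat\decay_i^{t-2}$ for $t \ge 2$, which is precisely the first column of $\BLT(\hat\scalev,\hat\decayv)$ by \cref{eq:coefs}. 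Thus $\hatgenf$ equals the generating function of $\BLT(\hat\scalev,\hat\decayv)$. Finally, since $\hatgenf(x) \cdot \genf(x) = 1$ by \Cref{lem:blt-ogfs}, the equivalence in \Cref{lem:generating-fn} (applied for every $n > 0$, with $\bfM_a = \bfI$) yields $\BLT_n(\scalev,\decayv) \cdot \BLT_n(\hat\scalev,\hat\decayv) = \bfI$, i.e.\ $\BLT(\scalev,\decayv)^{-1} = \BLT(\hat\scalev,\hat\decayv)$.

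The only subtle point, and the main thing I would check carefully, is the degenerate case $D = d-1$ from \Cref{prop:partial-frac}\ref{item:pfd-low-rank}, where $\hat\decay_d = 0$. Here the term $\hat\scale_d x/(1 - \hat\decay_d x)$ collapses to $\hat\scale_d x$, contributing only to the $t=2$ coefficient. I would verify this is consistent with the BLT formula \cref{eq:coefs} with $\decay_d = 0$, using the convention $0^0 = 1$, so that $\hat\decay_d = 0$ contributes $\hat\scale_d$ to $\hat c_2$ and nothing to $\hat c_t$ for $t > 2$. This matches the geometric-series collapse, so the identification $\hatgenf \leftrightarrow \BLT(\hat\scalev,\hat\decayv)$ still holds, and nothing further is needed.
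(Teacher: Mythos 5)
Your proposal is correct and follows essentially the same route as the paper, which treats the corollary as an immediate consequence of \Cref{lem:blt-ogfs}, \Cref{prop:partial-frac}, and \Cref{lem:generating-fn}; you simply spell out the substitution, the geometric-series expansion matching \cref{eq:coefs}, and the application of the Toeplitz--generating-function correspondence. Your extra check of the degenerate case $\hat\decay_d = 0$ (with the $0^0 = 1$ convention) is consistent with the paper's worked example and is a reasonable detail to make explicit.
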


\Cref{lem:blt-ogfs-new} of \Cref{sec:notation} conveniently summarizes all the generating function results that we have established so far.

\paragraph{Inverse BLT Parameter Properties}
In order to complete the proof, we must argue about the signs and magnitudes of $\hat\scale_i$'s and $\hat \decay_i$'s depending on the value of $\sum_i \scale_i / \decay_i$. The main technical result we show is:

\begin{proposition} \label{prop:technical-main}
    Consider the setting of \Cref{thm:main} with parameters $\scalev \in \R^d_{++}$ and $\decayv \in (0, 1)^d$ distinct, and let $\polyDiff(x)$ and $\polyDenom(x)$ be as defined in \cref{eq:genfn-c2} (see \Cref{lem:blt-ogfs}). Then, we have the following:
    \begin{enumerate}[label=(\alph*)]
        \item  \label{item:prop-tech-main:good-roots}
        If $\sum\iind \scale_i /\decay_i < 1$, then all roots $\nu_1, \ldots, \nu_d$ of $\polyInvDenom$ lie in $(1, \infty)$. Thus, the parameter $\hat\decay_i = 1 / \nu_i$ in the denominator of the partial fraction decomposition \eqref{eq:partial-frac} lies in $(0, 1)$ for each $i=1, \ldots, d$.
        \item  \label{item:prop:neg-root}
         If $\sum\iind \scale_i /\decay_i > 1$, then one root of $\polyInvDenom$ lies in $(-\infty, -1)$ while all other roots lie in $(1, \infty)$.
         Thus, we have $\hat\decay_i \in (0, 1)$ for $i=1, \ldots, d-1$ and $\hat \decay_d \in (-1, 0)$.
        \item  \label{item:prop-tech-main:good-consts}
         Irrespective of the value of $\sum\iind \scale_i/\decay_i$, we have that the numerator of the the partial fraction decomposition \eqref{eq:partial-frac} satisfies $\hat\scale_i < 0$ for $i=1, \ldots, d$.
    \end{enumerate}
\end{proposition}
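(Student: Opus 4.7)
The plan is to reduce all three claims to analyzing the auxiliary rational function
\[
    h(y) := 1 + \sum\iind \frac{\scale_i}{y - \decay_i},
\]
obtained from $\genf(x)$ via the substitution $y = 1/x$ (so that $h(y) = \genf(1/y)$ for $y \neq 0$). The zeros $y^{(1)}, \ldots, y^{(d)}$ of $h$ are exactly the reciprocals $1/\nu_i$ of the roots of $\polyInvDenom$, and the inverse decay parameters will satisfy $\hat \decay_i = y^{(i)}$. Since each $\scale_i > 0$, the derivative $h'(y) = -\sum\iind \scale_i/(y - \decay_i)^2$ is strictly negative wherever defined, so $h$ is strictly decreasing on each of the $d+1$ intervals determined by its simple poles, which I order as $\decay_d < \cdots < \decay_1$.

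Next, I would locate the zeros of $h$ from its boundary behaviour: $h(y) \to \pm\infty$ as $y \to \decay_i^{\mp}$ (positive residue $\scale_i$), and $h(y) \to 1$ as $y \to \pm\infty$. Combined with strict monotonicity and the intermediate value theorem, this yields exactly one zero in each middle interval $(\decay_{k+1}, \decay_k)$ for $k = 1, \ldots, d-1$ (where $h$ descends from $+\infty$ to $-\infty$), no zero on $(\decay_1, \infty)$ (where $h$ descends only from $+\infty$ to $1$), and exactly one zero on $(-\infty, \decay_d)$ (where $h$ descends from $1$ to $-\infty$). Labelling the roots in decreasing order as $y^{(1)} > \cdots > y^{(d)}$ gives the interlacing $\decay_1 > y^{(1)} > \decay_2 > \cdots > \decay_{d-1} > y^{(d-1)} > \decay_d > y^{(d)}$, with $y^{(d)}$ the outlier to be pinned down.

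To prove \ref{item:prop-tech-main:good-roots} and \ref{item:prop:neg-root}, I would refine the position of $y^{(d)}$ using the test values $h(0) = 1 - \sum\iind \scale_i/\decay_i$ and
\[
    h(-1) = 1 - \sum\iind \frac{\scale_i}{1 + \decay_i} > 1 - \sum\iind \scale_i > 0,
\]
where the strict inequalities use $\decay_i \in (0, 1)$ together with the hypothesis $\sum\iind \scale_i < 1$ inherited from \Cref{thm:main}. By monotonicity of $h$ on $(-\infty, \decay_d)$: if $\sum\iind \scale_i/\decay_i < 1$ then $h(0) > 0$ forces $y^{(d)} \in (0, \decay_d)$, giving $\nu_d = 1/y^{(d)} \in (1, \infty)$; and if $\sum\iind \scale_i/\decay_i > 1$ then $h(0) < 0 < h(-1)$ forces $y^{(d)} \in (-1, 0)$, giving $\nu_d \in (-\infty, -1)$. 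The boundary case $\sum\iind \scale_i/\decay_i = 1$ yields $h(0) = 0$, i.e., $y^{(d)} = 0$, consistent with $\polyInvDenom$ having degree $d - 1$ with $d - 1$ finite roots all in $(1, \infty)$.

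For \ref{item:prop-tech-main:good-consts}, I would factor $h(y) = Q(y)/P(y)$ with monic $P(y) = \prod_j (y - \decay_j)$ and $Q(y) = \prod_l (y - y^{(l)})$ of degree $d$. Substituting $y = 1/x$ in the partial fraction decomposition \eqref{eq:partial-frac} translates it to $P(y)/Q(y) = 1 + \sum_{i=1}^d \hat\scale_i/(y - y^{(i)})$, and the standard residue formula yields
\[
    \hat\scale_i = \frac{\prod_{j=1}^d (y^{(i)} - \decay_j)}{\prod_{k \neq i}(y^{(i)} - y^{(k)})}.
\]
A direct sign count from the interlacing shows that numerator and denominator differ in sign by exactly one factor: for $i < d$ with $y^{(d)} > 0$ the numerator has $i$ negative factors while the denominator has $i - 1$, and analogous counts (with $y^{(d)}$ contributing a positive factor $y^{(i)} - y^{(d)}$ for $i < d$, or switching the sign computation entirely for $i = d$) cover the case $i = d$ and the cases $y^{(d)} \leq 0$. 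In every case the ratio is $-1$, so $\hat\scale_i < 0$ uniformly across the three regimes of $\sum\iind \scale_i/\decay_i$. I expect the main obstacle to be pinning down $y^{(d)} \in (-1, 0)$ in case \ref{item:prop:neg-root}: ruling out $y^{(d)} \leq -1$ crucially exploits $\sum\iind \scale_i < 1$ through the estimate $\scale_i/(1+\decay_i) < \scale_i$, without which the claim $\hat\decay_d > -1$ would not follow from this approach.
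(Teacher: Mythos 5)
Your proposal is correct, and while it proves the same interlacing picture as the paper, the mechanism is genuinely different and worth contrasting. The paper works with the polynomial $\polyInvDenom(x)=\polyDenom(x)+x\,\polyDiff(x)$ in the original variable: it shows that the values $\polyInvDenom(\mu_i)=\mu_i\,\polyDiff(\mu_i)$ alternate in sign at the roots $\mu_i=1/\decay_i$ of $\polyDenom$ (\Cref{prop:p_of_mu_i,prop:r_of_mu_i}), which by the intermediate value theorem places $d-1$ roots in the gaps $(\mu_i,\mu_{i+1})$, and then pins down the last root via the leading coefficient $\polyInvDenom_d=\frac{(-1)^d}{M}\bigl(1-\sum\iind\scale_i/\decay_i\bigr)$ and the test value $\polyInvDenom(-1)>0$, with a case split on the parity of $d$ (\Cref{prop:final-root}). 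You instead pass to the reciprocal variable and study the secular function $h(y)=\genf(1/y)=1+\sum\iind\scale_i/(y-\decay_i)$, whose strict monotonicity between consecutive poles (a consequence of $\scalev>0$) delivers existence \emph{and} uniqueness of one zero per interval in a single stroke, with no parity analysis and no leading-coefficient computation; the role of the leading coefficient is played by $h(0)=1-\sum\iind\scale_i/\decay_i$, and your evaluation $h(-1)=1-\sum\iind\scale_i/(1+\decay_i)>1-\sum\iind\scale_i>0$ is exactly the paper's $\polyInvDenom(-1)>0$ divided by $\polyDenom(-1)>0$, using the hypothesis $\sum\iind\scale_i<1$ in the same place. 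Your treatment of part~\ref{item:prop-tech-main:good-consts} coincides with the paper's (\Cref{prop:consts-expression,prop:consts}): same residue formula, same sign count from the interlacing $\decay_1>\hat\decay_1>\cdots>\decay_d>\hat\decay_d$; note that the count of negative factors depends only on this ordering, so the sub-cases you distinguish according to the sign of $y^{(d)}$ all collapse to the identical computation. What your route buys is a cleaner realness-and-distinctness argument (all $d$ zeros of the monic degree-$d$ numerator of $h$ are exhibited on disjoint real intervals, so none can be complex), at the small bookkeeping cost that when $\sum\iind\scale_i/\decay_i=1$ the zero $y^{(d)}=0$ of $h$ is not the reciprocal of any root of $\polyInvDenom$ (which then has degree $d-1$) --- a degeneracy you flag correctly and which is outside the hypotheses of parts~\ref{item:prop-tech-main:good-roots} and~\ref{item:prop:neg-root} anyway.
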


\noindent Given \Cref{prop:partial-frac,prop:technical-main}, the proof of \Cref{thm:main} is immediate.

\begin{proof}[Proof of \Cref{thm:main}]
    The matrix $\BLT_n(\scalev, \decayv)$ is a lower-triangular matrix with all ones on the diagonal; thus it is invertible for all $n > 0$. By \Cref{cor:blt-inverse-form}, we have that 
    $\BLT(\scalev, \decayv)\inv = \BLT(\hat\scalev, \hat\decayv)$, where $\hat\scalev, \hat\decayv$ are as in \Cref{prop:partial-frac}. Then, the signs and magnitudes of $\hat\scalev, \hat \decayv$ imply the various parts of \Cref{thm:main}.
    In particular, part~\ref{part:thm:lt-1} of \Cref{thm:main} follows directly from from part~\ref{item:prop-tech-main:good-roots} of \Cref{prop:technical-main}.
    Similarly, part~\ref{part:thm:gt-1} of \Cref{thm:main} follows from \Cref{prop:technical-main}\ref{item:prop:neg-root}.
    Next, \Cref{thm:main}\ref{part:thm:eq-1} follows from 
    \Cref{prop:partial-frac}\ref{item:pfd-low-rank}, while the negative scale parameters follows from \Cref{prop:technical-main}\ref{item:prop-tech-main:good-consts}.
    Finally, the uniqueness of the inverse BLT parameters follows from two observations:
    \begin{itemize}
    \item the decay parameters $\hat \decayv$ are obtained as the roots of the polynomial $\polyInvDenom(x) = \polyDenom(x) + x\, \polyDiff(x)$ and are unique.
    \item the scale parameters $\hat \scalev$ are the coefficients of a partial fraction and are unique as per \Cref{prop:partial-frac}.
    \end{itemize}
\end{proof}

\begin{figure}[p]
    \centering
    \adjustbox{max height=0.75\textheight}{
    \includegraphics[width=0.99\linewidth]{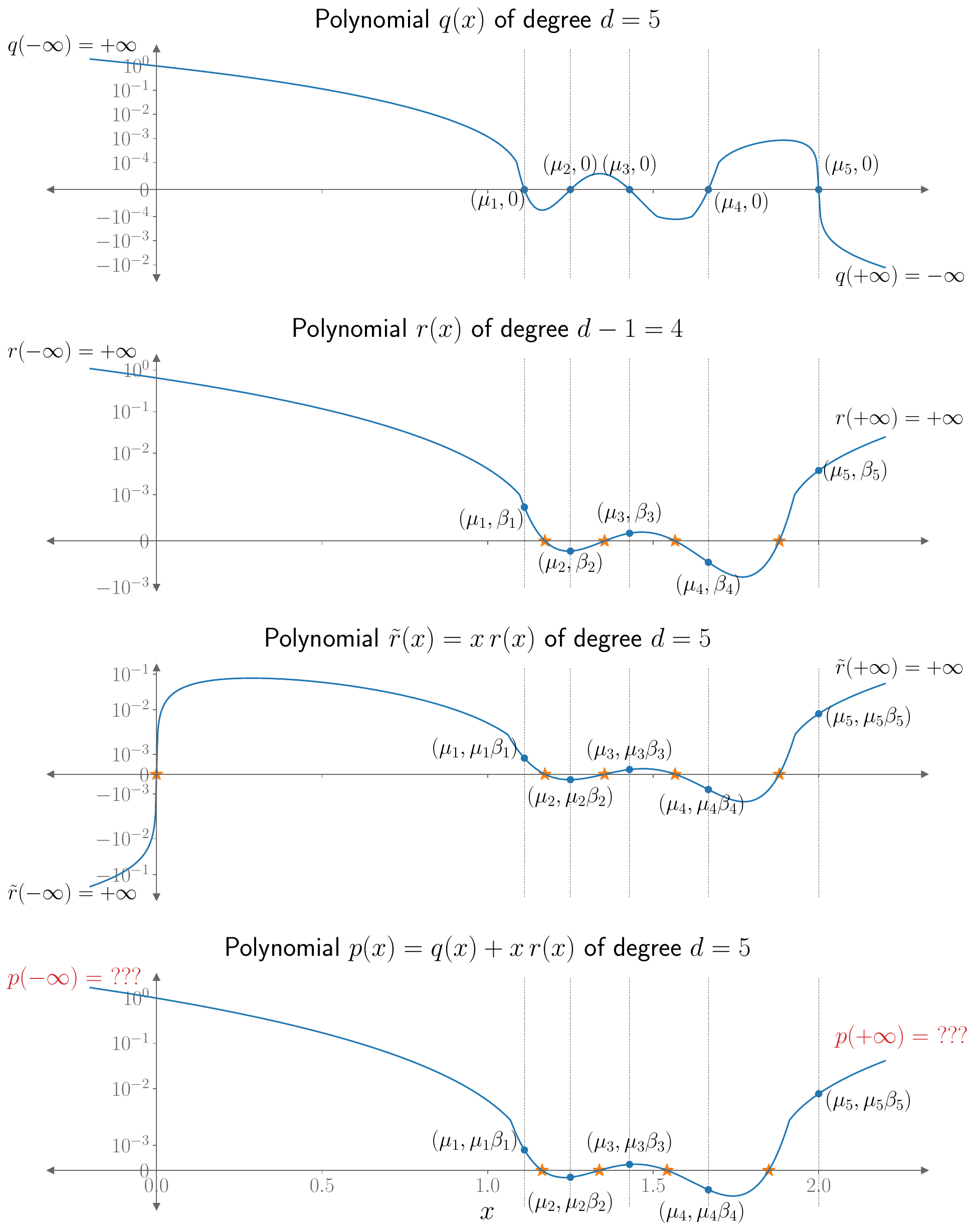}}
    \caption{\small {Illustrations of the polynomials $\polyDiff, \polyDenom, \polyInvDenom$ for $d=5$ in symmetrical log scale}.
    \textbf{First row}: Let $\mu_i 
    := 1/\decay_i > 1$ for $i=1, \ldots, d$ denote the roots of $\polyDenom(x)$ in ascending order. \textbf{Second row}: We show that $\beta_i := \polyDiff(\mu_i)$ is positive for $i$ odd and negative for $i$ even. Due to sign changes, each of the $d-1$ roots of $\polyDiff$ lies between $(\mu_i, \mu_{i+1})$ for some $i$ (denoted by the orange star). \textbf{Third row}: By the same argument, each $d-1$ non-zero roots of $x\, \polyDiff(x)$ lie in $(\mu_i, \mu_{i+1})$ for some $i$. 
    \textbf{Last row}: The same argument accounts for $d-1$ roots of the degree-$d$ polynomial $\polyInvDenom(x) = \polyDenom(x) + x \,\polyDiff(x)$. Since $d-1$ roots of the degree-$d$ real polynomial $\polyInvDenom$ are real, the final root must be real as well, establishing \Cref{prop:roots}. This is continued in \Cref{fig:deg-5b}; a similar argument also works $d$ even---see \Cref{fig:deg-4}.
    }
    \label{fig:deg-5}
\end{figure}

\begin{figure}[t]
    \centering
    \includegraphics[width=0.7\linewidth]{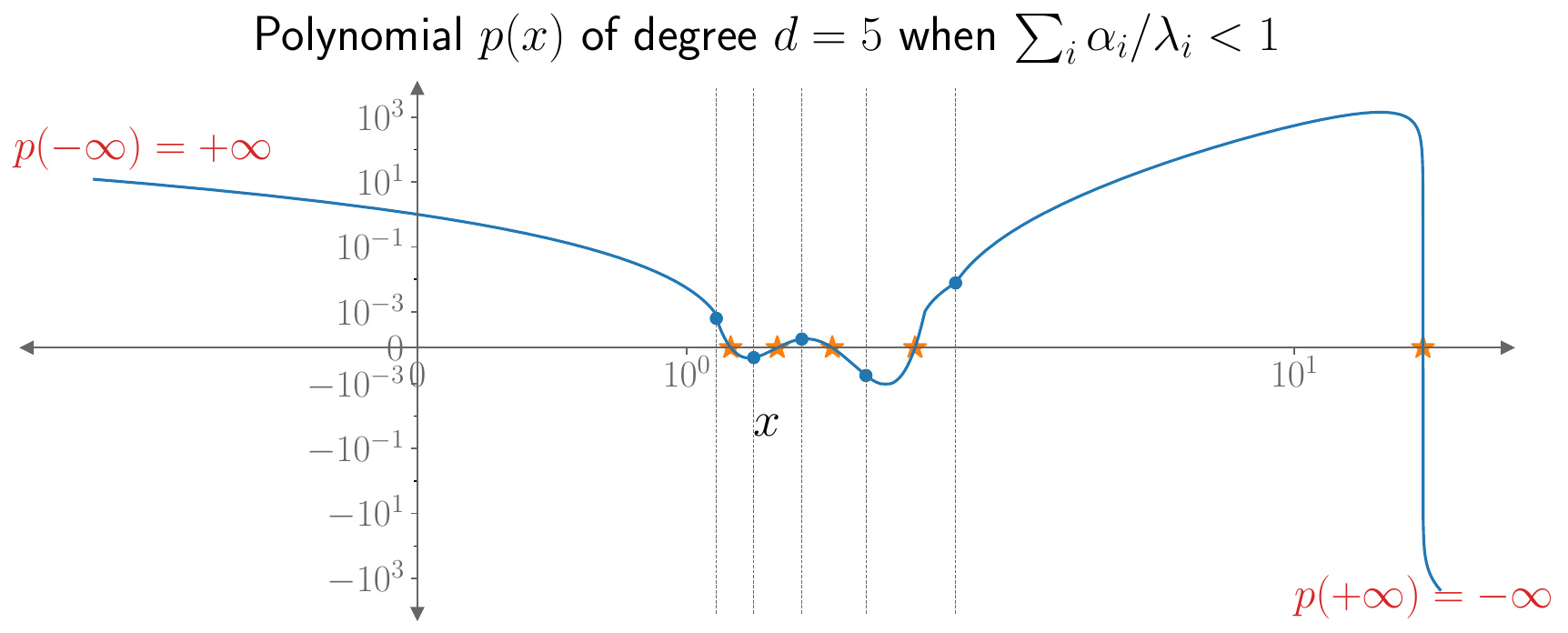}
    \includegraphics[width=0.7\linewidth]{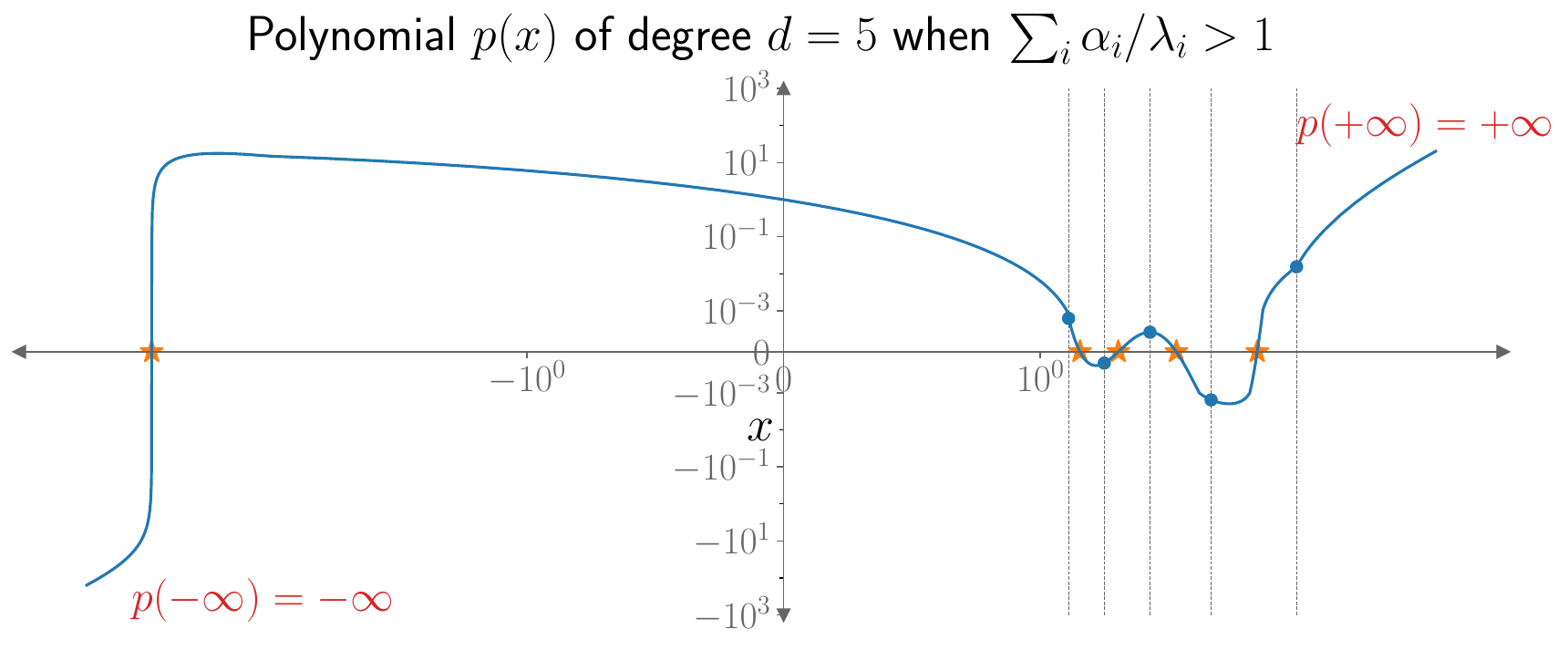}
    \caption{\small Continued from \Cref{fig:deg-5}, which shows $d-1$ roots of $\polyInvDenom(x) = \polyDenom(x) + x\,\polyDiff(x)$ for an example with $d=5$. This figure illustrates how the final $d$\textsuperscript{th} root of $\polyInvDenom(x)$ depends on the BLT parameters $\scalev, \decayv$. As previously, the dotted lines denote the roots $\mu_1, \ldots, \mu_d$ of $\polyDenom(x)$ (where $\mu_i = 1/\decay_i$) and the orange stars denote the roots of $\polyInvDenom(x)$.
    \textbf{Top}: When $\sum\iind \scale_i/  \decay_i < 1$, the last root of $\polyInvDenom$ is positive as well, as in \Cref{prop:technical-main}\ref{item:prop-tech-main:good-roots}.
    \textbf{Bottom}: When $\sum\iind \scale_i/  \decay_i > 1$, then $\polyInvDenom$ has a  negative root, as in \Cref{prop:technical-main}\ref{item:prop:neg-root}.
    }
    \label{fig:deg-5b}
\end{figure}
        
We summarize the key ideas behind the proofs of \Cref{prop:roots,prop:partial-frac,prop:technical-main}, with full proofs appearing in \Cref{sec:proof} (see \Cref{fig:deg-5,fig:deg-5b} for a key idea):
\begin{itemize}
    \item Let $\mu_1 < \cdots < \mu_d$ denote the roots of the polynomial $\polyDenom(x) = \prod\iind (1- \decay_i x)$. Since $\decay_i \in (0, 1)$, we have $\mu_i = \decay_i^{-1} \in  (1, \infty)$. Notably, all these roots are real.
    \item The key step in the proof is to argue about the sign of $\beta_i := \polyDiff(\mu_i)$. We show in the upcoming \Cref{prop:p_of_mu_i} that $\beta_i > 0$ for $i$ odd and $\beta_i < 0$ for $i$ even.
    \item Thus, we have that $\polyInvDenom(\mu_i) = \polyDenom(\mu_i) + \mu_i\, \polyDiff(\mu_i) = \mu_i \beta_i$ is positive for $i$ odd and negative for $i$ even. Thus, $\polyInvDenom$ admits a root in the interval $(\mu_i, \mu_{i+1})$ for each $i = 1, \ldots, d$. This accounts for $d-1$ roots of $\polyInvDenom$, which are real and positive. 
    Thus, $\polyInvDenom$ is of degree $d-1 \le D \le d$. 
    If $\polyInvDenom$ is of degree $D=d$, then the $d$\textsuperscript{th} root of $\polyInvDenom$ is real as well, since complex roots of a real polynomial can only occur in conjugate pairs---this gives \Cref{prop:roots}. 
    \item Next, reasoning about the partial fraction decomposition of the ordinary generating function $\hatgenf$ of the inverse BLT (defined in \Cref{lem:blt-ogfs}) gives us \Cref{prop:partial-frac}. In view of \Cref{cor:blt-inverse-form}, all that is now left for \Cref{prop:technical-main} is to reason about the final root of $\polyInvDenom$ (it can either be smaller than 0 or larger than 1) as well as the $\hat \scale_i$ coefficients.
    \item Next, 
    we argue about the last root of $\polyInvDenom$, in the case that its degree is $D=d$. The leading coefficients of $\polyDenom(x)$ and $x\,\polyDiff(x)$ have opposite signs; see \Cref{fig:deg-5}. We calculate the coefficient $\polyInvDenom_d$ of $x^d$ in the polynomial $\polyInvDenom$ as
    \[
        \polyInvDenom_d = (-1)^d \left( \prod\iind \decay_i \right) \left( 1- \sum\iind \frac{\scale_i}{\decay_i} \right) \,.
    \]
    Thus, we get the following cases (see \Cref{fig:deg-5b}):
    \begin{enumerate}[label=(Case \Roman*), leftmargin=\widthof{ (Case IIIV)}]
        \item If $\sum\iind \scale_i / \decay_i < 1$, then $\polyInvDenom(-\infty) = +\infty$. We show that $\polyInvDenom(\mu_d)$ and $\polyInvDenom(\infty)$ have opposite signs.
         Thus, the $d$\textsuperscript{th} root of the polynomial $\polyInvDenom$ is also larger than 1.
        \item If $\sum\iind \scale_i / \decay_i > 1$, then $\polyInvDenom(-\infty) = -\infty$. We show that $\polyInvDenom(-1) > 0$, leading to the conclusion that the $d$\textsuperscript{th} root of the polynomial $\polyInvDenom$ lies in $(-\infty, -1)$.
        \item If $\sum\iind \scale_i / \decay_i =1$, the leading order terms of $\polyDenom(x)$ and $x\, \polyDiff(x)$ cancel out and $\polyInvDenom(x)$ is a polynomial of degree $d-1$. Thus, it only has $d-1$ roots, all of which have previously been accounted for.
    \end{enumerate}
    This yields parts \ref{item:prop-tech-main:good-roots} and \ref{item:prop:neg-root} of \Cref{prop:technical-main}.
    \item Finally, we need to argue that the scale parameters $\hat\scale_i$ are negative. We do so by arguing about their signs from their closed-form expressions.
\end{itemize}

See \Cref{sec:proof} for full proofs of each of these steps.

\paragraph{Proof of \Cref{thm:calc-output-scale}}
We conclude this section with a proof of \Cref{thm:calc-output-scale}.
\begin{proof}[Proof of \Cref{thm:calc-output-scale}]
    The expressions of \cref{eq:calc-output-scale} can be obtained by simplifying Eq. (5.2) of \citet[Lemma 5.2]{dvijotham2024efficient}. Alternatively, \Cref{prop:consts-expression} of \Cref{sec:proof} gives a short and elementary proof of the expression for $\hat \scale_i$. Then, the expression for $\scale_i$ can be obtained by symmetry.
    
    Next, we turn to the equivalence of various representations.
    \begin{itemize}
        \item \ref{item:pos-param} $\implies$ \ref{item:decay-param}: Assuming w.l.o.g. the ordering $\decay_1 > \cdots \decay_d$, \Cref{thm:main}\ref{part:thm:lt-1} gives us unique $\hat \scalev < 0$ and $\hat \decayv \in (0, 1)^d$. The proof that $\decayv, \hat \decayv$ satisfy the claimed ordering is given in \Cref{cor:order-of-decays} in \Cref{sec:proof}.
        \item \ref{item:decay-param} $\implies$ \ref{item:pos-param}: We can evaluate the signs of the scale parameters $\scale_i$ in \Cref{eq:calc-output-scale}. For $\scale_1$, we have that all the terms in \cref{eq:calc-output-scale} are positive. For $i=2$, we have only one negative term $(\decay_2 - \hat \decay_1)$ in the numerator and only one negative term $(\decay_2 - \decay_1)$ in the denominator, and is thus positive. Similarly, each $\scale_i$ has $i-1$ negative terms each in the numerator and denominator, so that $\scale_i > 0$. To show the bound on $\sum_i \scale_i / \decay_i$, rearranging \cref{eq:pdf-identity} of \Cref{prop:consts-expression} (with $(\scalev, \decayv)$ and $(\hat \scalev, \hat \decayv)$ swapped) gives:
        \[
            \sum\iind \frac{\scale_i}{\decay_i} = 1 - \frac{\prod\iind \hat \decay_i}{\prod\iind \decay_i} < 1 \,,
        \]
        since $0 < \hat \decay_i / \decay_i < 1$ by assumption.
    \end{itemize}
\end{proof}

\section{Full Proof Details}
\label{sec:proof}

We give the full proofs of all remaining statements from \Cref{sec:proof-main}, i.e., \Cref{prop:roots,prop:partial-frac,prop:technical-main}.
In particular, \Cref{prop:roots} from \Cref{sec:proof-main} is a special case of \Cref{prop:r_of_mu_i,prop:final-root}\ref{item:final-root-low-rank}, while \Cref{prop:partial-frac} is established in Step 4 below and \Cref{prop:technical-main} follows directly from \Cref{prop:final-root} and \Cref{prop:consts} below.
The assumptions of \Cref{thm:main} are assumed to hold throughout. We also fix the BLT parameters $\scalev\in\R^d_{++}$ and $\decayv\in(0, 1)^d$ throughout, except when explicitly mentioned otherwise (e.g. \Cref{prop:consts-expression} is a notable exception).

\paragraph{Step 1: Notation and Properties of $\polyDenom(x)$}
As introduced previously, define $\mu_i = \decay_i^{-1}$ for $i=1,\ldots, d$. These are the roots of the polynomial $\polyDenom(x)$ from \cref{eq:genfn-c2}. We assume that $\mu_1 < \mu_2 < \cdots < \mu_d$; this is without loss of generality as we assumed in \Cref{thm:main} that $\decay_i$'s are distinct.
Finally, define the constant
\begin{align} \label{eq:M}
    M = \prod\iind \mu_i \,.
\end{align}
Using this, we can rewrite $\polyDenom(x)$ from \cref{eq:genfn-c2} as 
\begin{align} \label{eq:q-rewrite}
    \polyDenom(x) = \frac{(-1)^d}{M} \prod\iind (x - \mu_i) \,.
\end{align}

\paragraph{Step 2: Behavior of $\polyDiff(x)$ at the roots of $\polyDenom(x)$}
Let $\beta_i := \polyDiff(\mu_i)$ be the values of the polynomial $\polyDiff$ at the roots $\mu_1, \ldots, \mu_d$ of $\polyDenom$. (See \cref{eq:genfn-c2} for the definition of $\polyDiff$.)
We now argue that $\beta_i$ is positive if $i$ is odd and negative otherwise.
\begin{property} \label{prop:p_of_mu_i}
    We have
    \[
        \beta_i := \polyDiff(\mu_i) = \frac{\scale_i \mu_i}{M} \prod_{j \neq i}(\mu_j - \mu_i) \,.
    \]
    In particular, $\beta_i > 0$ if $i$ is odd and $\beta_i < 0$ if $i$ is even.
\end{property}
\begin{proof}
    Starting from the definition of the degree-$(d-1)$ polynomial $\polyDiff$ from \cref{eq:genfn-c2}, we have,
    \begin{align} \label{eq:p-rewrite}
        \polyDiff(x) &= \sum\iind \scale_i \prod_{j\neq i}(1 - \decay_j x) = \sum\iind \frac{\scale_i\mu_i}{M} \prod_{j\neq i} (\mu_j - x) \,,
    \end{align}
    where we substituted $\mu_i = \decay_i^{-1}$ and $M = \prod\iind \mu_i$.
    When computing $\polyDiff(\mu_i)$, we note that all but the $i$\textsuperscript{th} term will be zero, yielding the claimed expression for $\beta_i$. 
    Next, we turn to the signs: 
    $\beta_i$ has $(i-1)$ negative terms in the product, so its sign is the same as $(-1)^{i-1}$.
\end{proof}

\begin{remark}
    The polynomial $\polyDiff(x)$ can also be interpreted through the lens of Lagrange interpolation. We can rewrite \cref{eq:p-rewrite} as
    \[
        \polyDiff(x) = \sum\iind \beta_i \prod_{j \neq i} \frac{x - \mu_j}{\mu_i - \mu_j} \,,
    \]
    which is the Lagrange interpolant through $(\mu_1, \beta_1), \ldots, (\mu_d, \beta_d)$. In fact, $\polyDiff(x)$ is the unique degree $(d-1)$ polynomial interpolating these $d$ points. This fact was used in the proof of \cite[Lemma 5.2]{dvijotham2024efficient}, but we prove \Cref{prop:p_of_mu_i} (and the upcoming \Cref{prop:consts-expression}) by more direct and elementary means.
\end{remark}

\paragraph{Step 3: Roots of $\polyInvDenom(x)$}
The polynomial $\polyInvDenom(x) = \polyDenom(x) + x\, \polyDiff(x)$ is of degree $D\le d$. Let $\nu_1, \ldots, \nu_D$ be the roots of the polynomial (if they exist). We establish \Cref{prop:roots} with some additional properties which will be useful later:

\begin{property} \label{prop:r_of_mu_i}
    The polynomial $\polyInvDenom(x) = \polyDenom(x) + x\, \polyDiff(x)$ is of degree $D \in \{d-1, d\}$ are all its roots are unique and real. Further, there exists a root $\nu_i$ of $\polyInvDenom$ in the interval $(\mu_i, \mu_{i+1})$ for $i=1, \ldots, d-1$.
\end{property}
\begin{proof}
    Recall that the $\mu_i > 1$ are the roots of $\polyDenom(x)$. From \Cref{prop:p_of_mu_i}, we deduce that $\polyInvDenom(\mu_i) = \polyDenom(\mu_i) + \mu_i \, \polyDiff(\mu_i) = \mu_i \beta_i$ is positive for $i$ odd and negative for $i$ even. Next, we invoke the intermediate value theorem: since $\polyInvDenom(\mu_i)$ and $\polyInvDenom(\mu_{i+1})$ have opposite signs and $\polyInvDenom$ is a polynomial (and thus continuous), there exists $\nu_i \in (\mu_i, \mu_{i+1})$ such that $\polyInvDenom(\nu_i) = 0$ for each $i=1, \ldots, d-1$. 
    
    Finally, since $d-1$ roots exist, the polynomial $\polyInvDenom$ can only be of degree $d$ or $d-1$. If $\polyInvDenom$ is of degree $D=d-1$, there is nothing left to prove, so suppose that $\polyInvDenom$ is of degree $D=d$. Since $d-1$ roots of a degree-$d$ real polynomial are real, then the final $d$\textsuperscript{th} root should be real as well---this is because complex roots can only occur in conjugate pairs.
\end{proof}

It remains to argue about the potential final root of the polynomial $\polyInvDenom$:

\begin{property} \label{prop:final-root}
    In the setting of \Cref{prop:r_of_mu_i}, we have the following:
    \begin{enumerate}[label=(\alph*)]
        \item \label{item:final-root-low-rank}
        If $\sum\iind {\scale_i}/{\decay_i} = 1$, then $\polyInvDenom$ is a polynomial of degree $D= d-1 $. Otherwise, it is of degree $D=d$.
        \item \label{item:final-root-lt-1}
        If $\sum\iind {\scale_i}/{\decay_i} < 1$, then, the $d$\textsuperscript{th} root $\nu_d$ of $\polyInvDenom$ satisfies $\nu_d > \mu_d > 1$.
        \item \label{item:final-root-gt-1}
        If $\sum\iind {\scale_i}/{\decay_i} > 1$ but $\sum\iind \scale_i < 1$, then, the $d$\textsuperscript{th} root $\nu_d$  of $\polyInvDenom$ satisfies $\nu_d < -1$.
    \end{enumerate}
\end{property}
\begin{proof}
    We reason about the leading coefficient $\polyInvDenom_d$ of $x^d$ in the polynomial $\polyInvDenom(x) = \polyDenom(x) + x\, \polyDiff(x)$ such that $\polyInvDenom(x) - \polyInvDenom_d x^{d}$ is a polynomial of degree $(d-1)$.
    From \cref{eq:q-rewrite,eq:p-rewrite}, we deduce that
    \[
        \polyInvDenom_d = \frac{(-1)^d}{M} \left( 1- \sum\iind \scale_i \mu_i \right) = \frac{(-1)^d}{M} \left( 1- \sum\iind \frac{\scale_i}{\decay_i} \right) \,.
    \]
    Part \ref{item:final-root-low-rank} follows because $\polyInvDenom_d = 0$ if and only if $\sum\iind {\scale_i}/{\decay_i}=1$.
    Next, suppose that $\sum\iind {\scale_i}/{\decay_i} < 1$. We have two cases:
    \begin{itemize}
        \item $d$ is odd (see \Cref{fig:deg-5b} for reference): We have that $\polyInvDenom(\mu_d) > 0$ and $\lim_{x \to \infty} \polyInvDenom(x) = -\infty$ (because $\polyInvDenom_d < 0$). 
        \item $d$ is even (see \Cref{fig:deg-4b} for reference): We have that $\polyInvDenom(\mu_d) < 0$ and $\lim_{x\to \infty} \polyInvDenom(x) > 0$ (because $\polyInvDenom_d > 0$). 
    \end{itemize}
    In both cases, by the intermediate value theorem, the final root $\nu_d$ of $\polyInvDenom$ must lie in $(\mu_d, \infty)$, yielding part~\ref{item:final-root-lt-1}. 
    
    The proof of part~\ref{item:final-root-gt-1} also proceeds similarly. We have $\lim_{x \to -\infty} \polyInvDenom(x) < 0$ both for even and odd degree $d$. On the other hand, we have
    \begin{align*}
        \polyInvDenom(-1) &= \polyDenom(-1) - \polyDiff(-1)
              = \prod\iind (1 + \decay_i) \left( 1 - \sum\iind \frac{\scale_i}{1 + \decay_i}\right) > 0 \,,
    \end{align*}
    where the last inequality followed from $\scale_i, \decay_i > 0$ and
    \[
        \sum\iind \frac{\scale_i}{1 + \decay_i} < \sum\iind \scale_i < 1 
    \]
    by assumption.
    Another invocation of the intermediate value theorem implies that $\nu_d \in (-\infty, -1)$, completing the proof.
\end{proof}

We summarize the ordering of the decay parameters $\decayv, \hat \decayv$:
\begin{corollary} \label{cor:order-of-decays}
    In the setting of \Cref{prop:technical-main}, we have 
    \[
        \decay_1 > \hat \decay_1 > \decay_2 > \hat \decay_2 > \cdots > \hat \decay_{d-1} > \decay_d > \hat \decay_d \,.
    \]
\end{corollary}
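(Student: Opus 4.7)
The plan is to derive the ordering of the $\decay_i$'s and $\hat \decay_i$'s by simply reading off what the earlier propositions already say about the roots $\mu_i$ of $\polyDenom$ and $\nu_i$ of $\polyInvDenom$, and then inverting the inequalities. Since the corollary is invoked in the proof of \Cref{thm:calc-output-scale} at the step \ref{item:pos-param} $\Rightarrow$ \ref{item:decay-param}, we may (and should) work in the subcase of \Cref{prop:technical-main} where $\sum\iind \scale_i / \decay_i < 1$, since that is the regime in which all $\hat \decay_i$ are positive and the claimed strict interlacing makes sense.

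First I would fix the convention of Step 1 in \Cref{sec:proof}: reindex so that $\decay_1 > \decay_2 > \cdots > \decay_d$, equivalently $1 < \mu_1 < \mu_2 < \cdots < \mu_d$ with $\mu_i = 1/\decay_i$, and analogously index the roots of $\polyInvDenom$ so that $\hat\decay_i = 1/\nu_i$. Then I would invoke \Cref{prop:r_of_mu_i}, which gives for each $i=1,\ldots,d-1$ a root $\nu_i$ of $\polyInvDenom$ lying in $(\mu_i, \mu_{i+1})$. Combined with \Cref{prop:final-root}\ref{item:final-root-lt-1} (applicable precisely because $\sum\iind \scale_i/\decay_i < 1$), which yields $\nu_d > \mu_d$, we obtain the single chain
\begin{equation*}
1 < \mu_1 < \nu_1 < \mu_2 < \nu_2 < \cdots < \mu_{d-1} < \nu_{d-1} < \mu_d < \nu_d .
\end{equation*}

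Finally, since $x \mapsto 1/x$ is strictly decreasing on $(0,\infty)$ and every entry above is positive, taking reciprocals flips all inequalities to give
\begin{equation*}
1 > \decay_1 > \hat\decay_1 > \decay_2 > \hat\decay_2 > \cdots > \decay_{d-1} > \hat\decay_{d-1} > \decay_d > \hat\decay_d > 0,
\end{equation*}
which is the claim. There is essentially no obstacle here: all the work has been front-loaded into \Cref{prop:r_of_mu_i,prop:final-root}, and the proof reduces to correctly aligning the indexing conventions for $\mu_i,\nu_i$ with those for $\decay_i,\hat\decay_i$ and then reciprocating. The only subtlety worth flagging is that the corollary implicitly restricts to case \ref{item:prop-tech-main:good-roots} of \Cref{prop:technical-main}; in the other cases $\hat\decay_d$ is nonpositive and the final inequality of the stated chain would be vacuous or false, so the hypothesis $\sum\iind \scale_i/\decay_i < 1$ should be stated (or understood) explicitly.
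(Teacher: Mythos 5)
Your core argument is exactly the paper's: read off the interlacing $\mu_i < \nu_i < \mu_{i+1}$ for $i = 1, \ldots, d-1$ from \Cref{prop:r_of_mu_i}, handle the last root via \Cref{prop:final-root}, and take reciprocals. For the case $\sum\iind \scale_i/\decay_i < 1$ your write-up is correct and complete.

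However, there is a genuine (if small) gap in how you dispose of the remaining cases. You assert that when $\sum\iind \scale_i/\decay_i \ge 1$ the final inequality of the chain ``would be vacuous or false,'' and on that basis you restrict the corollary to case \ref{item:prop-tech-main:good-roots} of \Cref{prop:technical-main}. That is not right: in those cases $\hat\decay_d \in (-1,0)$ or $\hat\decay_d = 0$, while $\decay_d > 0$ by hypothesis, so $\decay_d > \hat\decay_d$ holds trivially and the full chain remains valid. The corollary is stated for the entire setting of \Cref{prop:technical-main}, and this generality is actually used: the proof of \Cref{prop:consts} invokes \Cref{cor:order-of-decays} ``irrespective of the value of $\sum\iind \scale_i/\decay_i - 1$'' to determine the signs of all the $\hat\scale_i$. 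The paper's proof closes this in one line --- if $\sum\iind \scale_i/\decay_i \ge 1$ then $\hat\decay_d \le 0 < \decay_d$ --- and you should add the same observation rather than narrowing the statement. (Your appended endpoints $1 > \decay_1$ and $\hat\decay_d > 0$ are also only correct in the first case; the corollary itself does not claim them, so just drop them outside that case.)
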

\begin{proof}
     Recall that we assumed the roots $\mu_i = \decay_i^{-1}$ of $\polyDenom(x)$ are ordered as $\mu_1 <  \cdots < \mu_d$. \Cref{prop:roots,prop:final-root} tell us that the roots $\nu_1, \ldots, \nu_d$ of $\polyInvDenom(x)$ satisfy $\mu_i < \nu_i < \mu_{i+1}$ for $i = 1, \ldots, d-1$. Thus, we have that $\decay_i = \mu_i^{-1}$ for $i \in [d]$ and $\hat \decay_i = \nu_i^{-1}$ for $i \in [d]$ are ordered as claimed. 
     If $\sum_i \scale_i / \decay_i < 1$ (as in \Cref{prop:technical-main}\ref{item:prop-tech-main:good-roots}), we have that $\nu_d > \mu_d$, leading to $\decay_d > \hat \decay_d$. 
     If not, we have $\hat \decay_d \le 0$ by \Cref{prop:final-root} again and $\decay_d > 0$ by assumption, leading to the claimed order.
\end{proof}

\paragraph{Step 4: Partial Fraction Decomposition}
We now give the proof of \Cref{prop:partial-frac} from \Cref{sec:proof-main}.

\begin{proof}[Proof of \Cref{prop:partial-frac}]
    We first prove that the polynomials $\polyDiff$ and $\polyInvDenom$ are co-prime, meaning that they do not share any roots.\footnote{
    Note that if $\polyDiff$ and $\polyInvDenom$ are co-prime, then the rational function $\polyDiff/\polyInvDenom$ is irreducible.
    } 
    Note that $\polyDiff$ and $\polyDenom$ are co-prime because $\polyDiff(\mu_i)$ is non-zero as per \Cref{prop:p_of_mu_i}, where $\mu_i$'s are the roots of $\polyDenom$. Then, we get that the greatest common divisor (denoted ``gcd'') of $\polyDiff$ and $\polyInvDenom$ is 
    \[
        \mathrm{gcd}(\polyDiff(x), \polyInvDenom(x)) = \mathrm{gcd}(\polyDiff(x), \polyDenom(x) + x\, \polyDiff(x))
        = \mathrm{gcd}(\polyDiff(x), \polyDenom(x)) = 1 \,,
    \]
    where the second equality used the property that $\mathrm{gcd}(\polyDiff, \polyDenom) = \mathrm{gcd}(\polyDiff, \polyDenom + \varphi\, \polyDiff)$ for any polynomial $\varphi$ (we take $\varphi(x) = x$). Thus, the $\polyDiff(x) / \polyInvDenom(x)$ is the ratio of of degree $d-1$ polynomial to a degree $D \in \{d-1, d\}$ polynomial.
    Using the fact that the roots $\nu_1, \ldots, \nu_D$ of the polynomial $\polyInvDenom$ are real and unique by \Cref{prop:roots}, we have the general form of the partial fraction decomposition
    \begin{align}
        \frac{\polyDiff(x)}{\polyInvDenom(x)} = \kappa_0 + \sum\iinD \frac{\kappa_i}{x - \nu_i} \,,
    \end{align}
    for some reals $\kappa_0, \kappa_1, \ldots, \kappa_D$. 
    Note that $\kappa_1, \ldots, \kappa_D$ are non-zero, while $\kappa_0$ is allowed to take zero values.
    Indeed, if any of $\kappa_1, \ldots, \kappa_D$ were zero, we would not obtain the correct degree in the denominator on the left side.
    We consider two separate cases depending on the degree $D$ of the polynomial $\polyInvDenom$.
    \begin{itemize}
        \item Case $D=d$: If $\kappa_0 \neq 0$, then numerator on the right side would have degree $d$. This is a contradiction because the degree of the numerator $\polyDiff$ on the left side is $d-1$. Thus, we must have $\kappa_0 = 0$. In this case, we set $\hat \decay_i = 1 / \nu_i$ and $\hat \scale_i = \kappa_i / \nu_i$ for $i \in [d]$ to obtain \eqref{eq:partial-frac}---note that all of these are non-zero.
        \item Case $D = d-1$: If $\kappa_0 = 0$, then the numerator on the right side would have degree $D-1=d-2$, which is again a contradiction (since $\mathrm{deg}(\polyDiff)=d-1$ on the left side). Thus, we must have $\kappa_0 \neq 0$. In this case, we set $\hat \decay_i = 1 / \nu_i$ and $\hat \scale_i = \kappa_i / \nu_i$ for $i \in [d-1]$---note that all of these are non-zero. We set $\hat \decay_d = 0$ and set the constant term $\hat \scale_d = -\kappa_0$ to obtain \cref{eq:partial-frac}, as desired.
    \end{itemize}
    Further, since the factors of the denominator are linear and unique (since $\polyInvDenom(x)$ has no repeated roots), there exist unique coefficients $\kappa_0, \ldots, \kappa_D$ that satisfy the partial fraction decomposition.
\end{proof}

\paragraph{Step 5: Scale Parameters of the Inverse BLT}
It remains to show \Cref{prop:technical-main}\ref{item:prop-tech-main:good-consts} regarding the sign of the scale parameters.
We start with a self-contained proof of the expression for the scale parameters from \Cref{thm:calc-output-scale}, and then argue about its sign.

\begin{lemma} \label{prop:consts-expression}
    Let $\decayv, \hat \decayv \in (\R \setminus \{0\})^d$ be distinct non-zero vectors (i.e., $\decay_i \neq \decay_j$ and $\hat \decay_i \neq \hat \decay_j$ for all $i \neq j$).\footnote{
        Note that (a) we do not restrict any of the decay parameters to lie in $(0, 1)$, and (b) the assumptions allow us to swap the roles of $(\decayv, \scalev)$ and $(\hat \decayv, \hat \scalev)$.
    } 
    Suppose also that $\decay_i \neq \hat\decay_j$ for all $i, j \in [d]$.
    Then, the constants $\hat \scale_1, \ldots, \hat \scale_d$ defined by 
    \begin{align}
    \label{eq:pfd-closed-form-new}
       \hat\scale_i = \frac{\prod_{j=1}^d \hat\decay_i - \decay_j}{\prod_{j \neq i} \hat\decay_i - \hat\decay_j}
    \end{align}
    satisfy the following partial fraction decomposition for the rational function
    \begin{align} \label{eq:pfd-calc-output-scale}
    \hatgenf(x) = \frac{\polyDenom(x)}{\polyInvDenom(x)}
    = \frac{\prod\iind (1 - \decay_i x)}{\prod\iind (1 - \hat \decay_i x)}
    = 1 + \sum\iind \frac{\hat \scale_i x}{1 - \hat\decay_i x} \,.
    \end{align}
    Further, $\hat \scale_1, \ldots, \hat \scale_d$ from \cref{eq:pfd-closed-form-new} are the unique values that satisfy the decomposition of \cref{eq:pfd-calc-output-scale}. They also satisfy the identity
    \begin{align} \label{eq:pdf-identity}
        \sum\iind \frac{\hat \scale_i}{\hat \decay_i} + \frac{\prod\iind \decay_i}{\prod\iind \hat \decay_i} = 1 \,.
    \end{align}
\end{lemma}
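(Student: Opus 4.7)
The plan is to view $\hatgenf(x) = \prod_{j=1}^d(1-\decay_j x)/\prod_{j=1}^d(1-\hat\decay_j x)$ as a rational function of degree $d$ in both numerator and denominator, and apply the classical theory of partial fractions. The distinctness and non-vanishing of $\hat\decayv$, together with the hypothesis $\decay_i\neq\hat\decay_j$, guarantee that the denominator has $d$ distinct simple poles at $x = 1/\hat\decay_i$ with no pole--zero cancellation. The standard theorem therefore produces a unique representation
\[
\hatgenf(x) = c_0 + \sum_{i=1}^d \frac{c_i}{1-\hat\decay_i x}
\]
for some reals $c_0, c_1, \ldots, c_d$, which I would then translate into the normalised form of \eqref{eq:pfd-calc-output-scale}.

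First I would determine $c_0$ by letting $x\to\infty$: the sum vanishes and the ratio of leading coefficients gives $c_0 = \prod_j\decay_j/\prod_j\hat\decay_j$. Next I would extract each $c_i$ by the standard residue trick: multiplying both sides by $(1-\hat\decay_i x)$ and evaluating at $x=1/\hat\decay_i$ annihilates every other term and yields
\[
c_i = \frac{\prod_j(1-\decay_j/\hat\decay_i)}{\prod_{j\neq i}(1-\hat\decay_j/\hat\decay_i)} = \frac{\prod_j(\hat\decay_i-\decay_j)}{\hat\decay_i\prod_{j\neq i}(\hat\decay_i-\hat\decay_j)},
\]
after clearing a factor $\hat\decay_i^d$ upstairs and $\hat\decay_i^{d-1}$ downstairs. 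I would then bridge to the desired form via the elementary identity $1/(1-\hat\decay_i x) = 1 + \hat\decay_i x/(1-\hat\decay_i x)$, which converts the partial fraction decomposition into $(c_0+\sum_i c_i) + \sum_i (c_i\hat\decay_i)\,x/(1-\hat\decay_i x)$. Evaluating at $x=0$ gives $\hatgenf(0)=1$, forcing $c_0+\sum_i c_i = 1$; setting $\hat\scale_i := c_i\hat\decay_i$ then reproduces \eqref{eq:pfd-closed-form-new} exactly.

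The identity \eqref{eq:pdf-identity} is then a repackaging of the constraint $c_0+\sum_i c_i = 1$: substituting $c_i = \hat\scale_i/\hat\decay_i$ and the expression for $c_0$ gives $\sum_i \hat\scale_i/\hat\decay_i + \prod_j\decay_j/\prod_j\hat\decay_j = 1$. Uniqueness of the $\hat\scale_i$ satisfying \eqref{eq:pfd-calc-output-scale} is inherited directly from the uniqueness of the partial-fraction decomposition for distinct simple linear factors. There is no genuine obstacle here; the only step requiring care is the bookkeeping that matches the textbook decomposition $c_0+\sum c_i/(1-\hat\decay_i x)$ against the normalised form with leading constant $1$ and numerators proportional to $\hat\scale_i x$. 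Once that algebraic bridge is in place, both the closed form \eqref{eq:pfd-closed-form-new} and the identity \eqref{eq:pdf-identity} drop out simultaneously.
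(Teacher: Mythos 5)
Your proposal is correct and follows essentially the same route as the paper's proof: both extract each coefficient by multiplying by $(1-\hat\decay_i x)$ and taking the limit at the pole $x\to\hat\decay_i^{-1}$, and both obtain the identity \eqref{eq:pdf-identity} from the $x\to\infty$ limit, with uniqueness inherited from the uniqueness of the partial fraction decomposition. The only difference is cosmetic bookkeeping --- you pass through the textbook form $c_0+\sum_i c_i/(1-\hat\decay_i x)$ and renormalize using $\hatgenf(0)=1$, whereas the paper works directly with the normalized form.
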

\begin{proof}
    \cref{eq:pfd-closed-form-new} can be derived by simplifying Eq. (5.3) of \cite[Lemma 5.2]{dvijotham2024efficient}, but we give a short elementary proof here.
    We start by multiplying \cref{eq:pfd-calc-output-scale} through by $(1 - \hat\decay_i x)$ and taking the limit $x \to \hat \decay_i^{-1}$ to get:
    \[
        \lim_{x \to \hat \decay_i^{-1}} (1 - \hat \decay_i x)\, \hatgenf(x)
        = \lim_{x \to \hat \decay_i^{-1}} \left[(1 - \hat \decay_i x) + \sum_{j=1}^d \frac{\hat \scale_j x \, (1 - \hat \decay_i x)}{1 - \hat \decay_j x} \right] 
        = \frac{\hat \scale_i}{\hat \decay_i} \,.
    \]
    Thus, we can evaluate the scale parameter as
    \begin{align*}
        \hat \scale_i &= \lim_{x \to \hat \decay_i^{-1}} \hat \decay_i (1 - \hat \decay_i x)\, \hatgenf(x)
        = \frac{\hat \decay_i \prod_{j=1}^d (1 - \decay_j / \hat \decay_i)}{\prod_{j \ne i}(1 - \hat \decay_j / \hat \decay_i)} \,,
    \end{align*}
    and rearranging yields \cref{eq:pfd-closed-form-new}. 
    
    Next, we can show \cref{eq:pdf-identity} by taking the limit of $x \to \infty$ in \cref{eq:pfd-calc-output-scale}:
    \[
        \frac{(-1)^d \prod\iind \decay_i}{(-1)^d \prod\iind \hat \decay_i}
        = \lim_{x \to \infty} \frac{\polyDenom(x)}{\polyInvDenom(x)}
        = \lim_{x \to \infty} \left(1 + \sum\iind \frac{\hat \scale_i x}{1 - \hat \decay_i x}\right) = 1 - \sum\iind \frac{\hat \scale_i}{\hat \decay_i} \,.
    \]
    Finally, the uniqueness of the $\scale_i$ parameters follows from the uniqueness properties of the coefficients of a partial fraction decomposition of a rational function $\hatgenf$ whose numerator and denominator are co-prime degree-$d$ polynomials. Specifically, this proof is identical to that of \Cref{prop:partial-frac} with the observation that the degree-$(d-1)$ polynomial $\polyDiff(x) = (\polyInvDenom(x) - \polyDenom(x)) / x$ is co-prime with $\polyInvDenom(x)$, and is omitted for brevity.
\end{proof}

Finally, we argue about the signs of the scale parameters.
\begin{proposition} \label{prop:consts}
    Consider the setting of \Cref{prop:technical-main}. Then, we have that $\hat \scale_i < 0$ for all $i \in [d]$.
\end{proposition}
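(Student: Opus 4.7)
The plan is to apply the closed-form expression for $\hat\scale_i$ from \Cref{prop:consts-expression}, combined with the interlacing ordering from \Cref{cor:order-of-decays}, and then conclude by a direct count of sign changes. Without loss of generality I would order the indices so that $\decay_1 > \decay_2 > \cdots > \decay_d$. By \Cref{cor:order-of-decays}, which applies throughout the setting of \Cref{prop:technical-main}, we have the interlacing
\[
\decay_1 > \hat\decay_1 > \decay_2 > \hat\decay_2 > \cdots > \hat\decay_{d-1} > \decay_d > \hat\decay_d.
\]

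Suppose first that $\sum\iind \scale_i/\decay_i \neq 1$, so that all $\hat\decay_i$ are non-zero and \Cref{prop:consts-expression} applies, giving
\[
\hat\scale_i \;=\; \frac{\prod_{j=1}^d (\hat\decay_i - \decay_j)}{\prod_{j\neq i}(\hat\decay_i - \hat\decay_j)}.
\]
In the numerator, $\hat\decay_i - \decay_j < 0$ precisely for $j \le i$ (since by interlacing $\hat\decay_i < \decay_i \le \decay_j$ for $j \le i$, while $\hat\decay_i > \decay_{i+1} \ge \decay_j$ for $j > i$), yielding $i$ negative factors and $d-i$ positive ones. Similarly, the interlacing gives $\hat\decay_j > \hat\decay_i$ iff $j < i$, so the denominator has exactly $i-1$ negative factors. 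Thus $\mathrm{sign}(\hat\scale_i) = (-1)^i / (-1)^{i-1} = -1$, i.e.\ $\hat\scale_i < 0$.

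For the remaining case $\sum\iind\scale_i/\decay_i = 1$, where $\hat\decay_d = 0$, the same sign-counting argument goes through verbatim for every $i < d$, since the needed inequalities only use $\hat\decay_i > 0 = \hat\decay_d$ and $\decay_d > 0 = \hat\decay_d$, both of which still hold. The one genuine obstacle is $i = d$, where the closed form from \Cref{prop:consts-expression} is no longer applicable. Here I would invoke the construction in the proof of \Cref{prop:partial-frac}, which sets $\hat\scale_d = -\kappa_0$ with $\kappa_0 = \lim_{x\to\infty}\polyDiff(x)/\polyInvDenom(x)$. In this case $\polyDiff$ and $\polyInvDenom$ are both of degree $d-1$; using $\sum\iind\scale_i\mu_i = 1$ with $\mu_i = 1/\decay_i$ and $M = \prod\iind \mu_i$, a short direct computation of their leading coefficients yields
\[
\kappa_0 \;=\; \frac{1}{\sum\iind \scale_i\, \mu_i^2} \;>\; 0,
\]
so $\hat\scale_d = -\kappa_0 < 0$.

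The main obstacle is thus the degenerate case $\sum_i\scale_i/\decay_i = 1$ at $i=d$; everything else reduces to book-keeping on the interlacing. A tempting alternative is a perturbation argument, since \Cref{prop:partial-frac} guarantees $\hat\scale_i \neq 0$ and, away from the measure-zero locus $\sum_i\scale_i/\decay_i = 1$, the inverse parameters vary continuously with every $\hat\scale_i$ strictly negative; continuity would then force the case (c) values to be non-positive, and being non-zero, strictly negative. However, making this rigorous requires tracking continuity across the degree change of $\polyInvDenom$ (one $\hat\decay$ escapes to $\pm\infty$), which seems more delicate than the direct leading-coefficient calculation above.
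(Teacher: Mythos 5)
Your proof is correct and follows essentially the same route as the paper's: apply the interlacing $\decay_1 > \hat\decay_1 > \cdots > \decay_d > \hat\decay_d$ from \Cref{cor:order-of-decays} to the closed form of \Cref{prop:consts-expression} and count $i$ negative factors in the numerator against $i-1$ in the denominator. The one place you go beyond the paper is the degenerate case $\sum\iind \scale_i/\decay_i = 1$, where $\hat\decay_d = 0$ falls outside the hypotheses of \Cref{prop:consts-expression}; the paper's proof does not treat this case separately, whereas your computation $\hat\scale_d = -\kappa_0 = -1/\sum\iind \scale_i \mu_i^2 < 0$ via the leading coefficients of $\polyDiff$ and $\polyInvDenom$ is a correct and worthwhile patch for that gap.
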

\begin{proof}
    \Cref{cor:order-of-decays} tells us that $\decay_1 > \hat \decay_1 > \decay_2 > \hat \decay_2 > \cdots > \hat \decay_{d-1} > \decay_d > \hat \decay_d$ (irrespective of the value of $\sum_i \scale_i / \decay_i - 1$).
    We use these relations to argue about the signs of each term in \cref{eq:pfd-closed-form-new}. For example, for $i=1$, we have the term $\hat \decay_1 - \decay_1$ in the numerator is negative but all other terms are strictly positive, leading to $\hat \scale_1 < 0$. In general the expression for $\hat \scale_i$ has 
    $i$ negative terms in the numerator and $i-1$ negative term in the denominator, lead to $\hat \scale_i < 0$.
\end{proof}

\begin{algorithm}[t]
\caption{Inverse BLT Parameterization}
\label{alg:inverse-blt-params}
\begin{algorithmic}[1]
\Require{BLT parameters $\scalev \in \R^d_{++}$ and $\decayv \in (0, 1)^d$ with $\sum\iind \scale_i / \decay_i \neq 1$ and $\scale_i$'s distinct.}
\Ensure{Parameters $\hat\scalev, \hat\decayv \in \R^d$ such that $\BLT_n(\scalev, \decayv)^{-1} = \BLT_n(\hat\scalev, \hat\decayv)$ for all integers $n > 0$.}

\State Define polynomials $\polyDiff, \polyDenom, \polyInvDenom$ as 
\[
\polyDenom(x) = \prod\iind (1 - \decay_i x), \quad
\polyDiff(x) = \sum\iind \scale_i \prod_{j \neq i} (1 - \decay_j x), \quad 
\polyInvDenom(x) = \polyDenom(x) + x\, \polyDiff(x) \,.
\]
\State Find the (real and distinct) roots $\nu_1, \ldots, \nu_d$ of $\polyInvDenom$. One way is to return the eigenvalues of the companion matrix
\[
    \bfM = 
    \left(\begin{array}{ccc|c}
    &\zeros_{d-1}\T && -\polyInvDenom_0 / \polyInvDenom_d \\
    \hline 
    &{ \Large \bfI_{d-1}} &&
    \begin{matrix}
        \,\,-\polyInvDenom_1 /  \polyInvDenom_d \\
        \vdots \\ 
        \,\,- \polyInvDenom_{d-1} / \polyInvDenom_d
    \end{matrix}
    \end{array}  \right)
    \in \R^{d \times d} \,,
\]
where $\polyInvDenom_0, \polyInvDenom_1, \ldots, \polyInvDenom_d$ are the coefficients of $\polyInvDenom$ so that $\polyInvDenom(x) = \sum_{k=0}^d \polyInvDenom_k x^k$,
and $\zeros_m\in\R^m$ denotes the vector of zeros, while $\bfI_{m} \in \R^{m\times m}$ denotes the identity matrix.
\State \label{line:inverse-blt-params}
For $i = 1, \ldots, d$, set the decay parameters $\hat \decay_i = 1/ \nu_i$ and
\[
    \hat \scale_i = \frac{\prod_{j=1}^d \hat\decay_i - \decay_j}{\prod_{j \neq i} \hat\decay_i - \hat\decay_j} \,.
\]
\State \textbf{Return} $\hat\scalev = (\scale_1, \ldots, \scale_d)$ and $\hat\decayv = (\decay_1, \ldots, \decay_d)$.
\end{algorithmic}
\end{algorithm}

\section{Differentiable Algorithms for Inverse BLTs}
\label{sec:algo}

We now give an algorithm to find the parameters corresponding to the inverse BLT.

From the preceding sections proving \Cref{thm:main}, we see that the decay parameters $\hat\decayv$ of the inverse BLT are obtained from the reciprocal of the roots of the polynomial $\polyInvDenom(x) = \polyDenom(x) + x\, \polyDiff(x)$, with scale parameters $\hat\scale_i$ as derived in \Cref{prop:consts-expression}. The entire procedure is summarized in \Cref{alg:inverse-blt-params}.

In particular, the standard procedure to find roots of the polynomial $\polyInvDenom$ is finding the eigenvalues of the companion matrix; numerical procedures to compute the eigenvalues of a matrix are widely available in common software packages. This is based on the following result:

\begin{lemma}[Thm. 3.3.14 of \cite{horn2012matrix}; \cite{kato2013perturbation}]
\label{lem:roots-via-comp-matrix}
    For any $\polyInvDenom_0, \ldots, \polyInvDenom_{d-1} \in \C$, the characteristic polynomial $\polyInvDenom(x) = \mathrm{det}(\bfM - x \bfI_d)$ of the matrix 
    \[
    \bfM = 
    \left(\begin{array}{ccc|c}
    &\zeros_{d-1}\T && -\polyInvDenom_0 \\
    \hline 
    &{ \Large \bfI_{d-1}} &&
    \begin{matrix}
        \,\,-\polyInvDenom_1 \\
        \vdots \\ 
        \,\,- \polyInvDenom_{d-1}
    \end{matrix}
    \end{array}  \right)
    \in \C^{d \times d} \,.
\]
is given by $\polyInvDenom(x) = \polyInvDenom_0 + \polyInvDenom_1 x + \cdots + \polyInvDenom_{d-1} x^{d-1} + x^d$.
In particular, the eigenvalues $\nu_1, \ldots, \nu_d$ (real or complex) of $\bfM$ are the roots of the degree-$d$ polynomial $\polyInvDenom$.
Furthermore, the map $(\polyInvDenom_0, \ldots, \polyInvDenom_{d-1}) \mapsto (\nu_1, \ldots, \nu_d)$ is continuously differentiable over the set
\[
    R =  \Big\{(\polyInvDenom_0, \ldots, \polyInvDenom_{d-1}) \in \C^d \,: \, 
    \text{ the roots } \nu_1, \ldots, \nu_d \text{ of } \polyInvDenom(x) \text { are distinct} \Big\} \,.
\]
\end{lemma}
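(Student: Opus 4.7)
The lemma has two claims, and my plan is to address them in turn. For the first claim---identifying the characteristic polynomial of $\bfM$---I would proceed by induction on $d$. The base case $d=1$ is immediate: $\bfM = (-\polyInvDenom_0)$ gives $\det(\bfM - x\bfI_1) = -(\polyInvDenom_0 + x)$, matching $\polyInvDenom(x)$ up to sign. For the inductive step, expand $\det(\bfM - x\bfI_d)$ along the first row, which has only two nonzero entries: $-x$ in column $1$ and $-\polyInvDenom_0$ in column $d$. The $(1,1)$-minor has the same companion form but with shifted coefficients $(\polyInvDenom_1, \ldots, \polyInvDenom_{d-1})$, so the inductive hypothesis computes its determinant. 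The $(1,d)$-minor is lower triangular with $1$'s on the diagonal (inherited from $\bfI_{d-1}$), so its determinant is $1$. Combining these two contributions while carefully tracking the cofactor signs yields $\polyInvDenom(x)$ up to a global $(-1)^d$ factor inherent to the $\det(\bfM - x\bfI_d)$ convention. Since roots are insensitive to this overall sign, the eigenvalues of $\bfM$ are precisely the roots of $\polyInvDenom$.

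For the second claim---continuous differentiability of the root map over $R$---I would apply the holomorphic implicit function theorem to $F : \C^{d+1} \to \C$ defined by $F(\polyInvDenom_0, \ldots, \polyInvDenom_{d-1}, x) = x^d + \sum_{k=0}^{d-1} \polyInvDenom_k x^k$. At any point $\polyInvDenom^\star = (\polyInvDenom_0^\star, \ldots, \polyInvDenom_{d-1}^\star) \in R$ and any root $\nu_i^\star$ of the corresponding polynomial, the simplicity of $\nu_i^\star$ gives $(\partial F/\partial x)(\polyInvDenom^\star, \nu_i^\star) = (\polyInvDenom^\star)'(\nu_i^\star) \neq 0$. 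The implicit function theorem then produces a unique holomorphic branch $\nu_i(\polyInvDenom)$ on an open neighborhood of $\polyInvDenom^\star$ satisfying $\nu_i(\polyInvDenom^\star) = \nu_i^\star$ and $F(\polyInvDenom, \nu_i(\polyInvDenom)) \equiv 0$. Shrinking the neighborhood so that all $d$ local branches remain disjoint delivers a continuously differentiable labeling of the roots as functions of the coefficients on an open neighborhood of each point in $R$, which is exactly the content of the claim.

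The main obstacle is essentially bookkeeping rather than any deep mathematical issue: carefully tracking sign factors in the inductive cofactor expansion so that the characteristic polynomial matches $\polyInvDenom$ in the desired form. The implicit function step is standard once simplicity of the roots is noted. A subtler but minor point is that the continuously differentiable labeling is a priori local, so a globally consistent ordering of the roots across all of $R$ would require analytic continuation along paths (with possible monodromy); this is not required for the lemma as stated, which asserts differentiability of the root map understood locally (equivalently, as a map into $\C^d / S_d$).
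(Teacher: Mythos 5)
Your proof is correct, but note that the paper does not actually prove this lemma at all: it is imported verbatim from the literature (Theorem 3.3.14 of Horn and Johnson for the companion-matrix identity, and Kato's perturbation theory for the differentiability of simple eigenvalues/roots), so there is no in-paper argument to compare against. Your self-contained treatment is sound on both counts. The cofactor expansion along the first row does yield $\det(\bfM - x\bfI_d) = (-1)^d\,\polyInvDenom(x)$ by the induction you describe, and since only the root set matters the sign convention is immaterial, exactly as you say. One cosmetic slip: the $(1,d)$-minor is upper (bi)diagonal with $1$'s on the diagonal and $-x$'s on the superdiagonal, not lower triangular; its determinant is still $1$, so nothing changes. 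The second half via the holomorphic implicit function theorem at a simple root (where $\polyInvDenom'(\nu_i^\star) \neq 0$) is the standard argument and is what the citation to Kato encapsulates; your closing remark that the $C^1$ labeling is a priori only local (i.e., the map is well defined into $\C^d$ only after a consistent local choice of ordering, or globally into $\C^d/S_d$) is a genuine subtlety that the paper's statement glosses over, and it is to your credit that you flagged it.
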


Thus, we get the following correctness guarantee:

\begin{corollary} \label{corr:alg-correct-diff}
    Given parameters $\scalev \in \R_+^d$ and $\decayv \in (0, 1)^d$ such that $\decay_i's$ are distinct and $\sum\iind \scale_i / \decay_i \neq 1$. Then, \Cref{alg:inverse-blt-params} returns parameters $\hat\scalev, \hat\decayv$ such that $\BLT_n(\scalev, \decayv)^{-1} = \BLT_n(\hat\scalev, \hat\decayv)$ for all $n > 0$.
    Furthermore, the map $(\scalev, \decayv) \mapsto (\hat\scalev, \hat \decayv)$ is continuously differentiable.
\end{corollary}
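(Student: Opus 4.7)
The plan is to separate the statement into its two claims: correctness of the output, and continuous differentiability of the induced map. For correctness, I would follow the chain of equalities laid out by the preceding results. Under the hypotheses, Proposition \ref{prop:roots} guarantees that $\polyInvDenom$ has degree exactly $d$ (since $\sum\iind \scale_i/\decay_i \ne 1$) and has $d$ distinct real roots $\nu_1,\dots,\nu_d$. Next, I would invoke Lemma \ref{lem:roots-via-comp-matrix} applied to the \emph{monic} polynomial obtained from $\polyInvDenom$ by dividing through by its leading coefficient $\polyInvDenom_d$: its roots coincide with those of $\polyInvDenom$, and the companion matrix written in Step 2 of Algorithm \ref{alg:inverse-blt-params} has precisely those roots as eigenvalues. (Here I would note that the division by $\polyInvDenom_d$ is legal precisely because $\sum\iind\scale_i/\decay_i\ne1$.) Setting $\hat\decay_i = 1/\nu_i$ as in Line \ref{line:inverse-blt-params} is then valid since $\nu_i\ne 0$ (the $\nu_i$ lie in the intervals $(\mu_i,\mu_{i+1})$ or $(-\infty,-1)$, each of which excludes $0$), and Proposition \ref{prop:partial-frac} combined with Corollary \ref{cor:blt-inverse-form} identifies these reciprocals with the decay parameters of $\BLT(\scalev,\decayv)^{-1}$. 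The scale parameters $\hat\scale_i$ returned by Line \ref{line:inverse-blt-params} match the closed form in Lemma \ref{prop:consts-expression}, which was shown to be the unique partial-fraction-coefficient description of $\hatgenf(x)=\polyDenom(x)/\polyInvDenom(x)$, so we conclude $\BLT_n(\scalev,\decayv)^{-1}=\BLT_n(\hat\scalev,\hat\decayv)$ for all $n>0$ by Lemma \ref{lem:generating-fn}.

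For continuous differentiability, I would factor the algorithmic map $(\scalev,\decayv)\mapsto(\hat\scalev,\hat\decayv)$ through three intermediate maps and check smoothness of each. First, $(\scalev,\decayv)\mapsto(\polyInvDenom_0,\dots,\polyInvDenom_d)$ is a polynomial (hence $C^\infty$) map since the coefficients of $\polyDenom$ and $\polyDiff$ are polynomials in $\scalev,\decayv$. Second, the normalization $(\polyInvDenom_0,\dots,\polyInvDenom_d)\mapsto(\polyInvDenom_0/\polyInvDenom_d,\dots,\polyInvDenom_{d-1}/\polyInvDenom_d)$ is smooth on the open set where $\polyInvDenom_d\ne 0$, which under our hypotheses is exactly where $\sum\iind\scale_i/\decay_i\ne 1$ by the leading-coefficient formula derived in Property \ref{prop:final-root}. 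Third, the eigenvalue map from companion coefficients to $(\nu_1,\dots,\nu_d)$ is continuously differentiable on the open set $R$ of Lemma \ref{lem:roots-via-comp-matrix}; our hypotheses place us inside $R$ because Proposition \ref{prop:roots} ensures distinct roots. Composing with the smooth reciprocal $\nu_i\mapsto 1/\nu_i$ (valid since $\nu_i\ne 0$) gives $\hat\decayv$.

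Finally, the formula for $\hat\scale_i$ in Line \ref{line:inverse-blt-params} is a rational function of $(\hat\decayv,\decayv)$ whose denominator $\prod_{j\ne i}(\hat\decay_i-\hat\decay_j)$ is non-zero precisely because the $\hat\decay_i$ are distinct (inherited from the distinctness of the $\nu_i$). Hence $\hat\scalev$ also depends continuously differentiably on $(\scalev,\decayv)$, and the composition is $C^1$ as claimed.

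The main obstacle I anticipate is not algebraic but bookkeeping: one must be careful that the hypothesis $\sum\iind\scale_i/\decay_i\ne 1$ is exactly what guarantees $\polyInvDenom_d\ne 0$ so that the companion matrix is well-defined with $d$ rows, and that the distinctness of the $\nu_i$ (ensured by Proposition \ref{prop:roots}) is exactly what puts us in the differentiability region of Lemma \ref{lem:roots-via-comp-matrix}. A minor subtlety is that the eigenvalue map produces an \emph{unordered} tuple of roots, so continuous differentiability holds up to a local smooth choice of ordering; this is harmless because $(\hat\scalev,\hat\decayv)$ is determined only up to a simultaneous permutation, which does not affect the BLT matrix it parameterizes.
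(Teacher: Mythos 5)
Your proposal is correct and follows essentially the same route as the paper's own proof: correctness by chaining the generating-function results (the paper cites \Cref{thm:main}, \Cref{prop:technical-main,prop:consts}, and \Cref{lem:generating-fn}, while you cite their more granular constituents), and differentiability by the same three-stage composition of smooth maps (parameters to polynomial coefficients, coefficients to roots via \Cref{lem:roots-via-comp-matrix}, roots to $(\hat\scalev,\hat\decayv)$). Your added remarks on the normalization by the leading coefficient $\polyInvDenom_d$ and on the unordered-root subtlety are careful refinements of the same argument rather than a different approach.
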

\begin{proof}
    The correctness of  \Cref{alg:inverse-blt-params} follows from \Cref{thm:main}, \Cref{prop:technical-main,prop:consts}, and \Cref{lem:generating-fn}. 
     We get the following composition of continuously differentiable functions to get from $\scalev, \decayv$ to $\hat\scalev, \hat\decayv$:
    \begin{itemize}
        \item $\scalev, \decayv$ to the coefficients of the polynomials $\polyDiff(x), \polyDenom(x)$, and $\polyInvDenom(x) = \polyDenom(x) + x\, \polyDiff(x)$;
        \item the coefficients of the degree-$d$ polynomial $\polyInvDenom(x)$ to its roots $\nu_1, \ldots, \nu_d$ (which are unique due to \Cref{prop:r_of_mu_i}, and thus this map is continuously differentiable due to \Cref{lem:roots-via-comp-matrix});
        \item $\nu_1, \ldots, \nu_d$ and the coefficients of the polynomial $\polyDenom$ to $\hat\scalev, \hat\decayv$ as per Line~\ref{line:inverse-blt-params} of \Cref{alg:inverse-blt-params}.
    \end{itemize}
    Thus, As a composition of continuously differentiable functions, the map $(\scalev, \decayv) \mapsto (\hat\scalev, \hat\decayv)$ is also continuously differentiable.
\end{proof}

\paragraph{Compatibility with Automatic Differentiation}
The max loss from \cref{eq:maxloss} is also a differentiable function of the BLT parameters $\scalev, \decayv$, thanks to \Cref{corr:alg-correct-diff}.
We can thus optimize $\scalev, \decayv$ to minimize the max loss \eqref{eq:maxloss} using first-order optimization, provided we can find the gradients of the loss w.r.t. $\scalev$ and $\decayv$. This can be achieved with automatic differentiation packages, including JAX and PyTorch which support hardware accelerators like GPUs.
Indeed, the only non-trivial operations (excluding addition, subtraction, multiplication, division) used to obtain the inverse parameters is the eigenvalue computation, and this function is differentiable when the eigenvalues are unique~\cite{kato2013perturbation}.

\section{Discussion and Open Problems}
\label{sec:discussion}

We give an inversion theorem for a family of Buffered Linear Toeplitz (BLT) matrices, a family of parameterized lower-triangular and Toeplitz matrices introduced by \citet{dvijotham2024efficient} for streaming differential privacy with correlated noise. The key contribution is proving that the inverse of a BLT matrix is also a BLT matrix, deriving the parameters of this inverse.  Specifically, we show that under certain conditions on the original BLT parameters, the inverse BLT parameters exhibit desirable properties for differential privacy applications.  Furthermore, we provide a differentiable algorithm for computing the inverse BLT parameters in $O(d^3)$ time, enabling the optimization of BLT mechanisms for private learning and estimation.

There are several interesting open problems in this space. The first one is to find the largest class of (BLT, inverse-BLT) systems that admit an inversion theorem such as \Cref{thm:main}, or an equivalence theorem such as \Cref{thm:calc-output-scale}.
Moreover, we observe theoretically (from the construction of \cite{dvijotham2024efficient}) and empirically that most practically relevant BLTs (in the context of streaming differential privacy) seem to satisfy the constraint $\sum\iind \scale_i / \decay_i < 1$ (or that such a constraint does not hurt). This leads to a practical question: what is the best set of BLTs to optimize over? 

The BLT class as introduced by \citet[Sec 1.2]{dvijotham2024efficient} is more general than the parameterization we give in \cref{eq:blt-param} --- they allow BLTs to be defined by Toeplitz coefficients given by an arbitrary order-$d$ linear recurrence, or equivalently, an arbitrary degree $d$ rational generating function, while the more restricted class we consider only captures rational generating functions with distinct roots and (except in degenerate cases) equal degree in the numerator and denominator. It is an interesting open question whether the generalization to arbitrary linear recurrences yields practical benefit. 

Here, we have some evidence in the affirmative, in that banded lower-triangular Toeplitz matrices are in fact such BLTs with a trivial recurrence (equivalently: a polynomial ordinary generating function). Such matrices have proved useful in DP with multiple participations and/or privacy amplification via randomizaton of the data order \citep{choquette2023amplified,choquette24amplification,mckenna2024scaling}. On the other hand, banded matrices are generally straightforward to reason about without the machinery of general BLTs or rational generating functions, so showing a strict improvement from this generalization remains an interesting open problem.

\bibliographystyle{unsrtnat}
\bibliography{bib}

\appendix

\section{Notation Summary}
\label{sec:notation}

\paragraph{Notation Table}
We summarize our main notation in \Cref{tab:notation}.

\paragraph{Generation Function Summary}
The following lemma summarizes the key results of the generating function that we use throughout.

\begin{lemma}\label{lem:blt-ogfs-new}
We show two characterizations of the generating functions of BLTs. 
\begin{itemize}
    \item For non-zero scale parameters $\scalev \in \R_{++}^d$ and decay parameters $\decayv \in (0, 1)^d$, we define the polynomials  
    \[
    \polyDenom(x) \coloneqq \prod\iind (1 - \decay_i x)
    \qquad \text{and} \qquad
    \polyDiff(x) \coloneqq \sum\iind \frac{\scale_i \polyDenom(x)}{1 - \decay_i x}.
    \]
    Then, there exist $\hat\scalev \in \R^d, \hat\decayv \in \R^d$ such that the statements below hold.
    \item Given non-zero distinct decay parameters $\decayv, \hat\decayv \in (\R \setminus \{0\})^d$, we define the polynomials
    \[
    \polyDenom(x) \coloneqq \prod\iind (1 - \decay_i x) 
    \qquad \text{and} \qquad
    \polyInvDenom(x) \coloneqq \prod\iinD (1 - \hat\decay_i x).
    \]
    Assuming $\decay_i \neq \hat \decay_j$ for all $i, j \in [d]$, there exist $\scalev, \hat\scalev \in \R^d$ such that the statements below hold.
\end{itemize}
Then, in either of the above scenarios, the following equalities hold:
\begin{align*}
  \polyDenom(x) 
    &= \prod\iind (1 - \decay_i x) 
     = \frac{(-1)^d}{M} \prod\iind (x - \mu_i)
    && \text{of degree $d$ with roots $\mu_i = \decay_i^{-1}$ and $M = \prod\iind \mu_i$} \\
  \polyDiff(x) &= \sum\iind \frac{\scale_i \polyDenom(x)}{1 - \decay_i x}  && \text{of degree $d-1$, and}\\
  \polyInvDenom(x) &= \polyDenom(x) + x \polyDiff(x) = \prod\iind (1 - \hat\decay_i x)&& \text{of degree $D \le d$ with roots $\nu_i = \hat\decay\inv_i$ for $i \in [D]$.}
\end{align*}
Further, $\genf(x)$ and $\hatgenf(x)$ are the ordinary generating functions for $\BLT(\scalev, \decayv)$ and $\BLT(\hat\scalev, \hat\decayv)=\BLT(\scalev, \decayv)\inv$:

\begin{alignat}{5}
\label{eq:genfn-c-v2}
\genf(x) 
  &=
  & 1 + x \, \frac{\polyDiff(x)}{\polyDenom(x)}
  &= \frac{\polyDenom(x) + x \polyDiff(x)}{\polyDenom(x)}
  &= \frac{\polyInvDenom(x)}{\polyDenom(x)} 
  & 
  &= 1 + \sum\iind \frac{\scale_i x}{1 - \decay_i x}\,, \\
\label{eq:genfn-cinv-v2}
\hatgenf(x)
  &= \frac{1}{\genf(x)} 
  &= 1 + x \frac{-\polyDiff(x)}{\polyInvDenom(x)}
  &= \frac{\polyDenom(x)}{\polyDenom(x) + x \polyDiff(x)}
  &=\frac{\polyDenom(x)}{\polyInvDenom(x)}
  &
  &= 1 + \sum\iind \frac{\hat\scale_i x}{1 - \hat\decay_i x}.
\end{alignat}
\end{lemma}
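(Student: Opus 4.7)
The plan is to assemble the statement from results already in hand rather than prove anything new: \Cref{lem:blt-ogfs} supplies the generating-function identities, \Cref{thm:main} together with \Cref{prop:partial-frac,prop:technical-main} handles the scale-decay side, and \Cref{thm:calc-output-scale} (or equivalently \Cref{prop:consts-expression}) handles the decay-decay side. The proof will work through each bullet of the two scenarios in turn, invoking the corresponding earlier result and reconciling notation.

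For the first scenario, I would apply \Cref{thm:main} to obtain the unique pair $(\hat\scalev, \hat\decayv)$ satisfying $\BLT(\scalev, \decayv)^{-1} = \BLT(\hat\scalev, \hat\decayv)$. The definition of $\polyDenom$ and the factored form $\polyDenom(x) = (-1)^d M^{-1} \prod\iind (x - \mu_i)$ are immediate from $\mu_i = \decay_i^{-1}$ and $M = \prod\iind \mu_i$. The definition of $\polyDiff$ from \Cref{lem:blt-ogfs} then gives the equality $\polyInvDenom(x) = \polyDenom(x) + x\polyDiff(x)$, while the factorization $\polyInvDenom(x) = \prod\iind (1 - \hat\decay_i x)$ comes from \Cref{prop:partial-frac}, which identifies $\hat\decay_i = \nu_i^{-1}$ for the real roots of $\polyInvDenom$; when the degree drops to $D = d-1$, the convention $\hat\decay_d = 0$ makes the trailing factor trivial. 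Equations \eqref{eq:genfn-c-v2} and \eqref{eq:genfn-cinv-v2} are then restatements of \eqref{eq:genfn-c1}, \eqref{eq:genfn-cinv}, and \eqref{eq:hatgenf}.

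For the second scenario, I would apply \Cref{thm:calc-output-scale} to obtain unique $\scalev, \hat\scalev$ satisfying $\BLT(\scalev, \decayv) = \BLT(\hat\scalev, \hat\decayv)^{-1}$, with closed form \eqref{eq:calc-output-scale}, then define $\polyDiff$ from $(\scalev, \decayv)$ exactly as in scenario one. The one nontrivial equality to verify is $\polyInvDenom(x) = \polyDenom(x) + x\polyDiff(x)$. The cleanest route is to observe that $\polyDenom(x)/\polyInvDenom(x) = \hatgenf(x)$ by \eqref{eq:pfd-calc-output-scale} of \Cref{prop:consts-expression}, while $(\polyDenom(x) + x\polyDiff(x))/\polyDenom(x) = \genf(x)$ by \eqref{eq:genfn-c1}; since $\hatgenf = 1/\genf$, cross-multiplying gives the desired polynomial identity, and the generating-function equalities then follow at once.

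The main obstacle is purely bookkeeping: keeping the product-range conventions ($\prod\iinD$ vs $\prod\iind$) consistent, handling the degenerate case $D = d-1$ in which one inverse decay equals zero and contributes a trivial factor to the product, and checking that the $\polyDiff$ constructed from the second scenario's scales agrees with the $\polyDiff$ appearing in $\polyInvDenom = \polyDenom + x\polyDiff$. No new mathematical content is required beyond what has already been proved.
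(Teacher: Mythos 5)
Your proposal is correct and matches the paper's treatment: the paper gives no standalone proof of this lemma, explicitly presenting it as a summary of results already established by \Cref{lem:generating-fn,lem:blt-ogfs}, \Cref{prop:roots,prop:partial-frac}, \Cref{cor:blt-inverse-form}, \Cref{thm:calc-output-scale}, and \Cref{prop:consts-expression}, which is exactly the assembly you describe. Your cross-multiplication argument for $\polyInvDenom = \polyDenom + x\,\polyDiff$ in the second scenario and your handling of the degenerate $D = d-1$ case via the trivial factor $\hat\decay_d = 0$ are both consistent with how the paper derives these facts.
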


\begin{table}[tbhp]
\centering
\renewcommand{\arraystretch}{1.4}%
\begin{tabular}{lp{4.7in}}
\toprule
\textbf{Symbol} & \textbf{Meaning} \\
\midrule
$\strategy[j, k]$ & The $(j,k)$\textsuperscript{th} entry of the matrix $\strategy$. \\
$\bfZ \sim \normalnm{0}{\sigma^2}$ & A random matrix  $\bfZ \in \R^{n \times m}$ whose entries are i.i.d. $\normal(0, \sigma^2)$. \\
$d$ & Number of buffers (positive integer) \\
$[d]$ & The set $\{1, 2, \dots, d\}$. \\
$\scalev \in \R^d$ &  Scale parameters of the BLT. We assume $\scale_i \ge 0$ and $\sum\iind \scale_i < 1$ \\
$\decayv \in \R^d$ &  Decay parameters of the BLT. We assume distinct $\decay_i \in (0, 1)$ for each $i$ \\
$\BLT(\scalev, \decayv)$ & A semi-infinite lower triangular Toeplitz matrix whose first column is given by $1, \sum\iind \scale_i, \sum\iind \scale_i \decay_i, \sum\iind \scale_i \decay_i^2, \sum\iind \scale_i \decay_i^3, \ldots$ \\
$\BLT_n(\scalev, \decayv)$ &  An $n \times n$ lower triangular and Toeplitz matrix which is the principal sub-matrix of $\BLT(\scalev, \decayv)$ \\ 
$\hat \scalev \in \R^d, \hat \decayv^d$ & Scale and decay parameters such that $\BLT(\scalev, \decayv)\inv = \BLT(\hat\scalev, \hat\decayv)$ (whose existence is posited by \Cref{thm:main}) \\
\midrule
$\polyDenom(x)$ & The degree-$d$ polynomial $\prod\iind (1 - \decay_i x)$; $\decay_i$'s are assumed distinct throughout \\ 
$\polyDiff(x)$ & The degree-$(d-1)$ polynomial  $\displaystyle \sum\iind \frac{\scale_i \polyDenom(x)}{1 - \decay_i x }$ \\  
$\polyInvDenom(x)$ & The polynomial $\polyInvDenom(x) = \polyDenom(x) + x \, \polyDiff(x)$; its degree-$D$ can be $d-1$ or $d$ \\
$D$ & Degree of the polynomial $\polyInvDenom$ \\
$\nu_1, \ldots, \nu_D$ & Roots of $\polyInvDenom$; the decay parameter $\hat \decayv$ of the inverse BLT satisfies $\hat\decay_i = \nu_i^{-1}$ for $i \in [D]$\\
$\genf$ & 
\begin{tabular}{l}
Generating function of the first column of $\BLT(\scalev, \decayv)$. It satisfies \\
\multicolumn{1}{l}{
$\displaystyle \genf(x) = 1 + x \frac{\polyDiff(x)}{\polyDenom(x)} = \frac{\polyInvDenom(x)}{\polyDenom(x)} = 1 + \sum\iind \frac{\scale_i x}{1 - \decay_i x}$}
\end{tabular} \\
$\hatgenf$ & 
\begin{tabular}{l}
Generating function of the first column of $\BLT(\scalev, \decayv)\inv = \BLT(\hat \scalev, \hat \decayv)$. It satisfies \\
\multicolumn{1}{l}{
$\displaystyle \hatgenf(x) = 1 + x \frac{-\polyDiff(x)}{\polyInvDenom(x)} = \frac{\polyDenom(x)}{\polyInvDenom(x)} = 1 + \sum\iind \frac{\hat\scale_i x}{1 - \hat\decay_i x}$}
\end{tabular} \\
\midrule
$\mu_1, \ldots, \mu_d$ & Roots of the polynomial $\polyDenom(x)$; satisfies $\mu_i = \decay_i\inv$ and sorted in ascending order \\
$M$ & Shorthand for $\prod\iind \mu_i$\\
$\beta_1,\ldots, \beta_d$ & Constants that satisfy $\beta_i = \polyDiff(\mu_i)$  \\
\bottomrule
\end{tabular}
\caption{Summary of main notation. Matrices and vectors are denoted in boldface.}\label{tab:notation}
\end{table}

\section{More Illustrations and Details}

\begin{figure}[p]
    \centering
    \adjustbox{max height=0.75\textheight}{
    \includegraphics[width=0.99\linewidth]{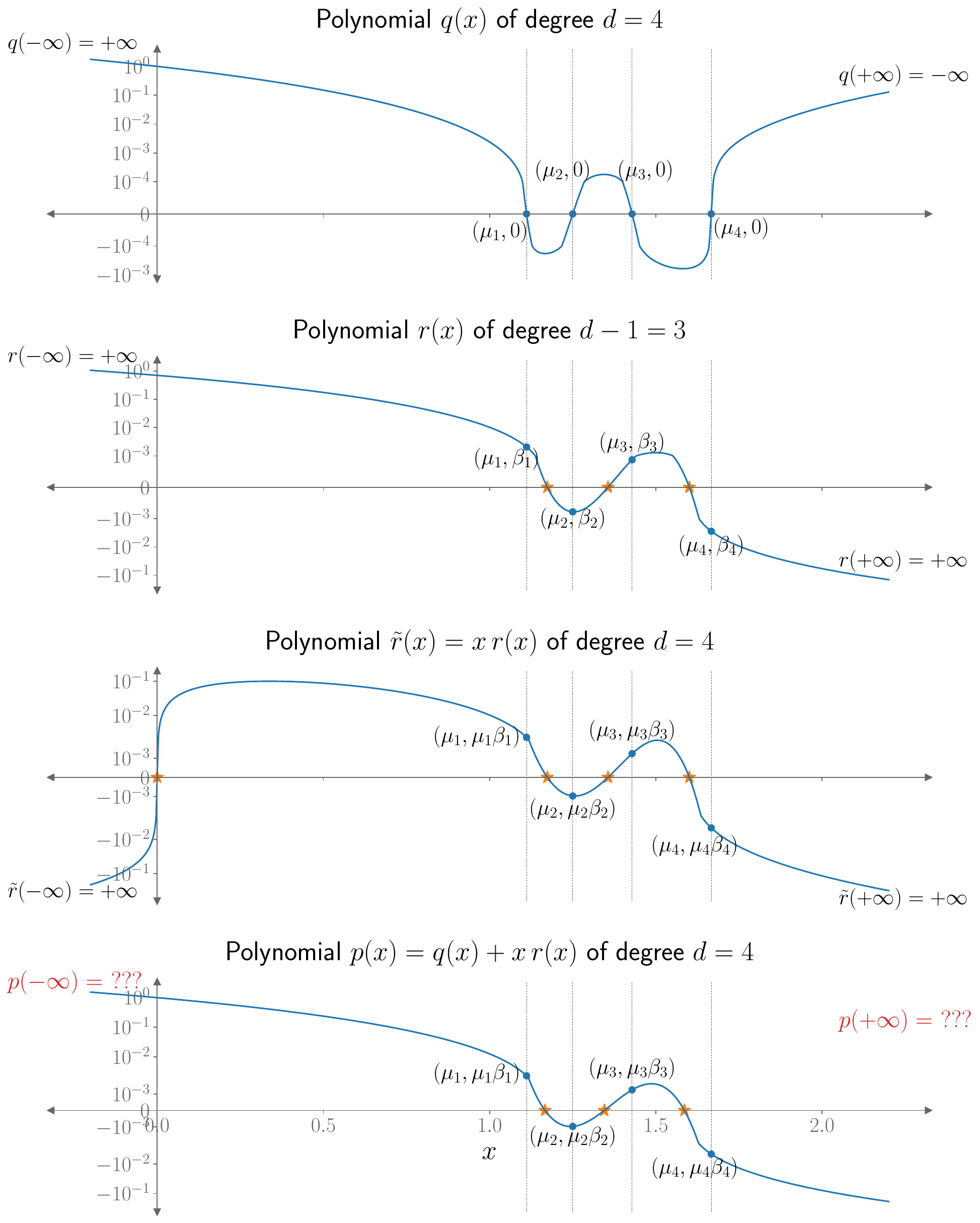}}
    \caption{\small {Illustrations of the polynomials $\polyDiff, \polyDenom, \polyInvDenom$ for $d=4$ in symmetrical log scale}. This is the counterpart of \Cref{fig:deg-5} for even degree $d$.
    }
    \label{fig:deg-4}
\end{figure}

\begin{figure}[t]
    \centering
    \includegraphics[width=0.7\linewidth]{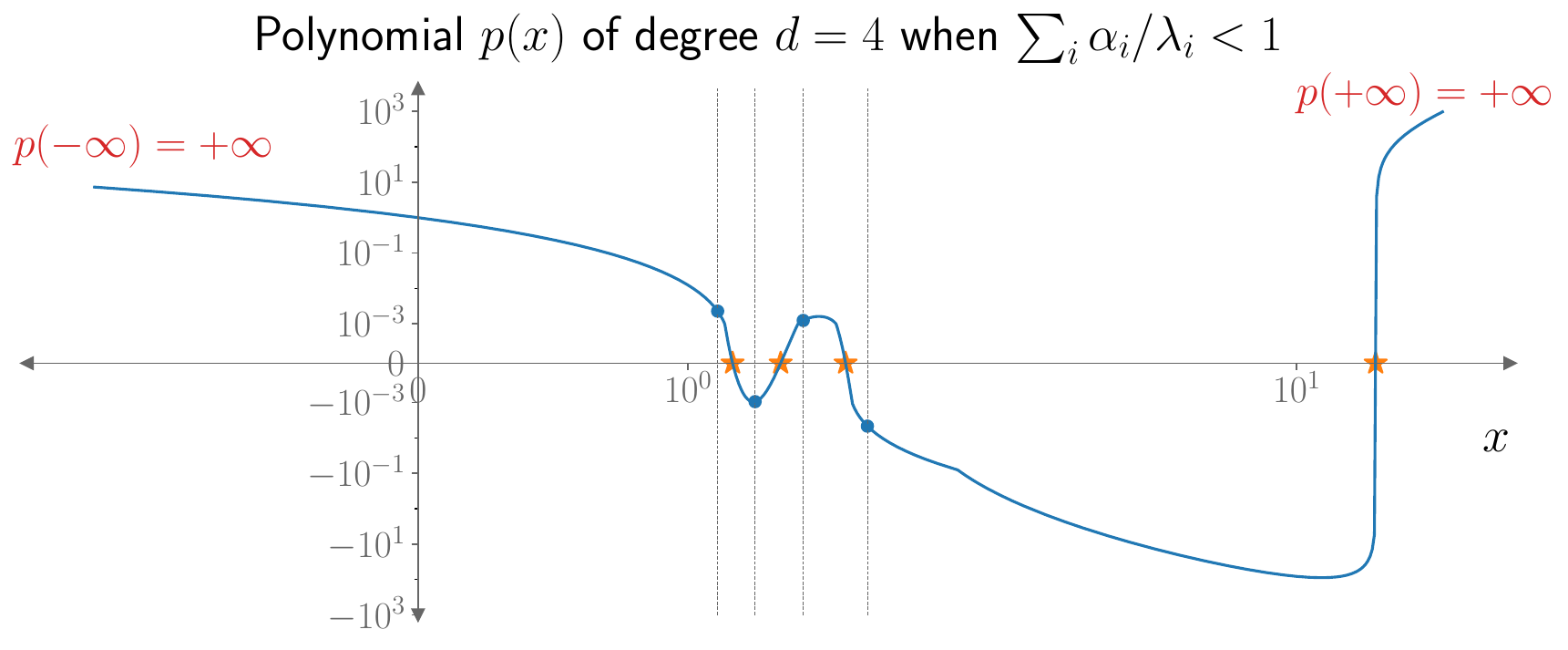}
    \includegraphics[width=0.7\linewidth]{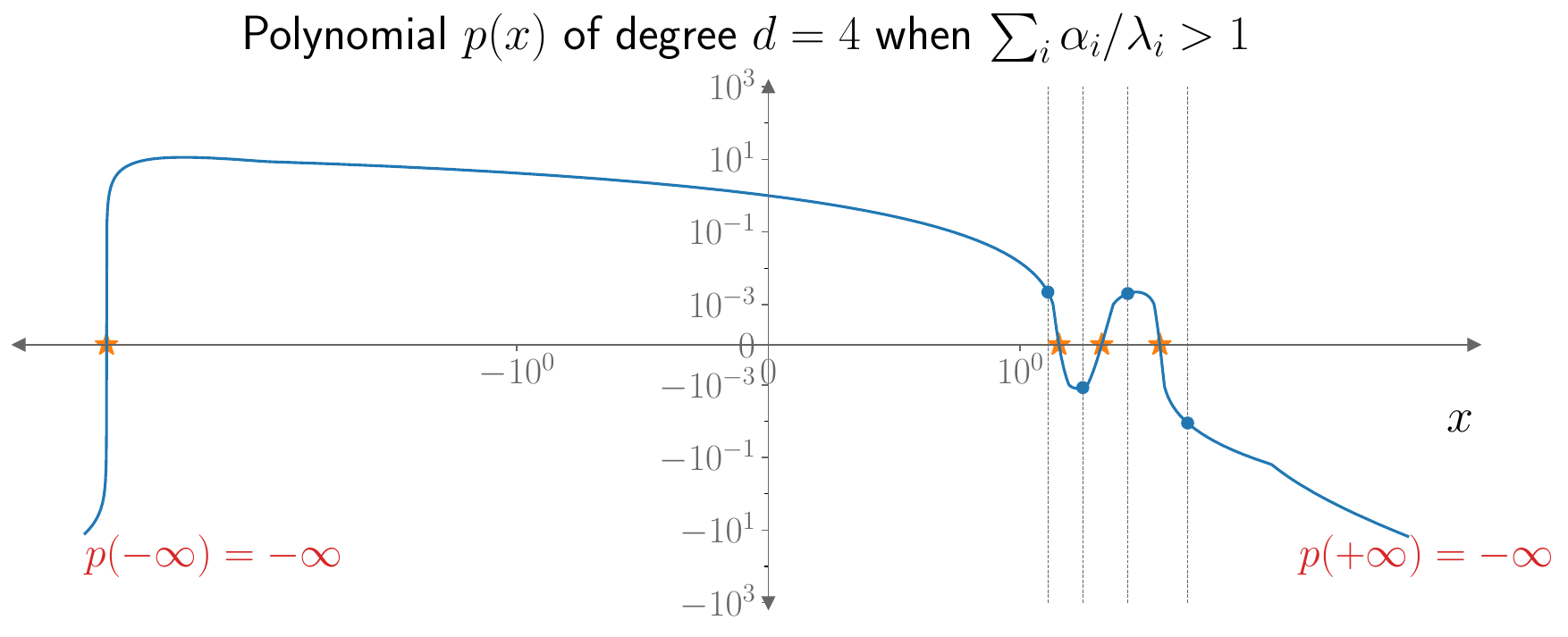}
    \caption{\small The counterpart of \Cref{fig:deg-5b} for even degree: this plot shows examples for $d=4$ and is continued from \Cref{fig:deg-4}, which shows $d-1$ roots of $\polyInvDenom(x) = \polyDenom(x) + x\,\polyDiff(x)$. This figure illustrates how the final $d$\textsuperscript{th} root of $\polyInvDenom(x)$ depends on the BLT parameters $\scalev, \decayv$. As previously, the dotted lines denote the roots $\mu_1, \ldots, \mu_d$ of $\polyDenom(x)$ (where $\mu_i = 1/\decay_i$) and the orange star denotes the roots of $\polyInvDenom(x)$.
    }
    \label{fig:deg-4b}
\end{figure}

We give examples illustrating the behaviors of the polynomials $\polyDiff, \polyDenom, \polyInvDenom$ defined in \Cref{lem:blt-ogfs} for degree $d=5$ in \Cref{fig:deg-5,fig:deg-5b} and for degree $d=4$ in \Cref{fig:deg-4,fig:deg-4b}. They use the following BLT parameters:
\begin{itemize}
    \item \Cref{fig:deg-5} and the top row of \Cref{fig:deg-5b}: $\scalev=(0.2, 0.15, 0.1, 0.1, 0.1)$ and $\decayv = (0.9, 0.8, 0.7, 0.6, 0.5)$. We have $\sum\iind \scale_i / \decay_i \approx 0.919 < 1$.
    \item Bottom row of \Cref{fig:deg-5b}: $\scalev = (0.2, 0.15, 0.2, 0.2, 0.2)$ and $\decayv = (0.9, 0.8, 0.7, 0.6, 0.5)$.
    We have $\sum\iind \scale_i = 0.95 < 1$ and $\sum\iind \scale_i / \decay_i \approx 1.43 > 1$.
    \item \Cref{fig:deg-4} and the top row of \Cref{fig:deg-4b}: $\scalev=(0.25, 0.2, 0.15, 0.1)$ and $\decayv=(0.9, 0.8, 0.7, 0.6)$. We have $\sum\iind \scale_i / \decay_i \approx 0.909 < 1$.
    \item Bottom row of \Cref{fig:deg-4b}: $\scalev=(0.24, 0.24, 0.24, 0.24)$ and $\decayv=(0.9, 0.8, 0.7, 0.6)$.
    We have $\sum\iind \scale_i = 0.96 < 1$ and $\sum\iind \scale_i / \decay_i \approx 1.31 > 1$.
\end{itemize}

\end{document}